 \theoremstyle{plain}
\newtheorem*{thm}{Theorem}
\newtheorem*{prop}{Proposition}
\newtheorem{theorem}{Theorem}[section]
\newtheorem{lemma}[theorem]{Lemma}
\newtheorem{proposition}[theorem]{Proposition}
\theoremstyle{definition}
\newtheorem{remark}[theorem]{Remark}
\def\FF{{\mathbb F}}
\def\Gal{\mathop{\rm Gal}\nolimits}
\def\Jac{\mathop{\rm Jac}\nolimits}
\def\NS{\mathop{\rm NS}\nolimits}
\def\ev{\mathop{\rm ev}\nolimits}
\def\Num{\mathop{\rm Num}\nolimits}
\def\Pic{\mathop{\rm Pic}\nolimits}
\def\Tr{\mathop{\rm Tr}\nolimits}
\def\imod#1{\allowbreak\mkern10mu({\operator@font mod}\,\,#1)}
\title{Algebraic geometry codes over abelian surfaces containing no absolutely irreducible curves\\of low genus} 
\author{Yves Aubry, Elena Berardini, Fabien Herbaut and Marc Perret}
\newcommand{\Addresses}{{
  \bigskip
  \footnotesize

  Yves~Aubry, \textsc{Institut de Math\'ematiques de Toulon - IMATH,}\par\nopagebreak
  \textsc{Universit\'e de Toulon and Institut de Math\'ematiques de Marseille - I2M,}\par\nopagebreak
  \textsc{Aix Marseille Universit\'e, CNRS, Centrale Marseille, UMR 7373, France}\par\nopagebreak
  \textit{E-mail address}: \texttt{yves.aubry@univ-tln.fr}

  \medskip

  Elena~Berardini, \textsc{Institut de Math\'ematiques de Marseille - I2M,}\par\nopagebreak
  \textsc{Aix Marseille Universit\'e, CNRS, Centrale Marseille, UMR 7373, France}\par\nopagebreak
  \textit{E-mail address}: \texttt{elena\_berardini@hotmail.it}

  \medskip

  Fabien~Herbaut,  \textsc{INSPE Nice-Toulon,  Universit\'e C\^ote d'Azur,}\par\nopagebreak
  \textsc{Institut de Math\'ematiques de Toulon - IMATH, Universit\'e de Toulon, France}\par\nopagebreak
  \textit{E-mail address}: \texttt{fabien.herbaut@univ-cotedazur.fr}
  
   \medskip
   
  Marc~Perret, \textsc{Institut de Math\'ematiques de Toulouse, UMR 5219,}\par\nopagebreak
  \textsc{Universit\' e de Toulouse, CNRS, UT2J, F-31058 Toulouse, France}\par\nopagebreak
  \textit{E-mail address}: \texttt{perret@univ-tlse2.fr}

}}
\subjclass[2010]{14K05, 14G50, 94B27}
\keywords{Abelian surfaces, AG codes, Weil restrictions, finite fields}
\begin{document}
\maketitle
\begin{abstract}
We provide a theoretical study of 
Algebraic Geometry codes constructed from abelian surfaces 
defined over finite fields.
We give a general bound on their minimum distance and
we investigate how this estimation can be sharpened 
under the assumption that the abelian surface does not contain low genus curves. 
This approach naturally leads us to consider Weil restrictions of elliptic curves and 
abelian surfaces which do not admit a principal polarization.
\end{abstract}
\section{Introduction}\label{Intro}

The success 
of Goppa construction (\cite{Goppa}) of codes over algebraic curves 
in breaking the Gilbert-Varshamov bound (see Tsfasman-Vl\u{a}du\c{t}-Zink bound in \cite{Tsfasman_Vladut_Zink})
has been generating much interest over the last forty years.
This gave birth to the field of Algebraic Geometry codes.
It results a situation with 
a rich background and many examples of 
evaluation codes derived from algebraic curves  (see for instance \cite{Tsfasman_Vladut_Nogin}).
The study of Goppa construction
 from higher dimensional varieties
 has begun with few exceptions
in the first decade of the twenty-first century.
Although the construction holds in any dimension, the main focus has been put on algebraic surfaces.

The case of ruled surfaces is considered by Aubry in \cite{Aubry}.
The case of toric surfaces is addressed among others by Little and Schenck in \cite{Little_Schenck_Toric} and by Nardi in \cite{Nardi}. 
Voloch and  Zarzar 
introduce the strategy of looking 
for surfaces with small Picard number (\cite{Voloch_Zarzar} and \cite{Zarzar}).
This approach is discussed in \cite{Little_Schenck} and used by Couvreur in
\cite{Couvreur} to obtain very good codes over rational surfaces.
In a parallel direction
Little and Schenck (\cite{Little_Schenck}) stress 
the influence of the sectional genus of the surface, 
that is the genus of a generic section.
Finally, Haloui investigates
the case of simple Jacobians of curves of genus $2$ 
in \cite{Haloui}.

\medskip

The aim of this article is to study 
codes constructed from general abelian surfaces. 
While from the {\sl geometric} point of view
(i.e.\ over an algebraically closed field)
a principally polarized abelian surface 
is isomorphic either to the Jacobian of a curve of genus $2$ or to the product of two elliptic curves, 
the landscape turns to be richer from the {\sl arithmetic} point of view. 
Weil proved that 
over a finite field $k$
there is exactly one more possibility, that is
the case of the Weil restriction 
of an elliptic curve defined over a quadratic extension of $k$ (see for instance \cite[Th.1.3]{Howe_Nart_Ritzenthaler}).
Moreover, one can also consider 
abelian surfaces which do not admit a principal polarization.

The main contribution of this paper is twofold. 
First, we give a lower bound on the minimum distance of codes constructed over general abelian surfaces.
Secondly, we sharpen this lower bound
for abelian surfaces which do not contain absolutely irreducible curves defined over $\mathbb{F}_{q}$ of arithmetic genus less or equal than a fixed integer $\ell$.
In order to summarise our results in the following theorem,
let us consider an ample divisor $H$
on an abelian surface $A$ 
and let us denote by
$\mathcal{C}(A,rH)$  
the generalised evaluation code whose
construction is recalled in Section \ref{Codes_from_Abelian_Surfaces}.

\begin{thm}(Theorem \ref{ourbound} and Theorem \ref{minimum_distance_simple_surfaces}) \label{Thm_Intro}
Let $A$ be an abelian surface defined over $\mathbb{F}_q$ of trace $\Tr(A)$. 
Let $m=\lfloor 2\sqrt{q}\rfloor$, $H$ be an ample divisor on $A$
rational over $\mathbb{F}_q$ and $r$ be a positive integer large enough  so that $rH$ is very ample.

Then,
the minimum distance $d(A,rH)$ of the code $\mathcal{C}(A,rH)$ satisfies
\begin{equation} \label{bound_1}
d(A,rH)\geq  \#A(\mathbb{F}_q)-rH^2(q+1-\Tr(A)+m)-r^2m\frac{H^2}{2}. 
\end{equation}
Moreover, if $A$ is simple and contains no absolutely irreducible curves of arithmetic genus $\ell$ or less for some positive integer $\ell$, then
\begin{equation}\label{mainbound}
d(A,rH) \geq  \#A(\mathbb{F}_q)-\max\left( \left\lfloor r\sqrt{\frac{H^2}{2}}\right\rfloor(\ell-1), \varphi(1), \varphi\left(\left\lfloor r\sqrt{\frac{H^2}{2\ell}}\right\rfloor \right)\right),
\end{equation}
where
\footnotesize{
$$\varphi(x):= m\left(r \sqrt{\frac{H^2}{2}} - x \sqrt{\ell}\right)^2+2m\sqrt{\ell}\left(r \sqrt{\frac{H^2}{2}} - x \sqrt{\ell}\right)  +x \Big{(} q+1-\Tr(A)+(\ell-1)(m-\sqrt{\ell})\Big{)} + r \sqrt{\frac{H^2}{2}}(\ell-1).$$}
\end{thm}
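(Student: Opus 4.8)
The plan is to read both bounds off the equality $d(A,rH)=\#A(\Fq)-N$, where $N$ is the largest number of $\Fq$-rational points on an effective divisor linearly equivalent to $rH$; this is immediate from the construction of $\mathcal{C}(A,rH)$ recalled in Section~\ref{Codes_from_Abelian_Surfaces}, the weight of the codeword attached to a nonzero section $f$ of $\mathcal{O}_A(rH)$ being $\#A(\Fq)$ minus the number of $\Fq$-points of $\operatorname{div}(f)$. Thus both inequalities reduce to an upper bound for $\#D(\Fq)$ when $D\geq 0$ and $D\equiv rH$. Write $D=\sum_i n_i\Gamma_i$ with the $\Gamma_i$ pairwise distinct and $\Fq$-irreducible; only the reduced support $D_{\mathrm{red}}=\sum_i\Gamma_i$ carries rational points, so $\#D(\Fq)\leq\sum_i\#\Gamma_i(\Fq)$. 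When $\Gamma_i$ is not absolutely irreducible, each of its $\Fq$-points lies on all its Galois conjugates, hence on the intersection of two of them, so $\#\Gamma_i(\Fq)\leq\tfrac12\Gamma_i^2=p_a(\Gamma_i)-1$; these terms get absorbed below. Everything therefore hinges on a point count for an absolutely irreducible curve $C\subset A$ over $\Fq$.

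For such a $C$, let $\nu\colon\widetilde C\to C$ be the normalisation; composing $\widetilde C\to C\hookrightarrow A$ with the Jacobian's universal property yields a morphism $\Jac(\widetilde C)\to A$ with image an abelian subvariety $B\subseteq A$. If $B\neq A$ then $\dim B=1$ ($C$ being a curve), so $C$ is a translate of the elliptic curve $B\subset A$; by Poincaré reducibility $A\sim B\times B'$, whence $\Tr(B)\geq\Tr(A)-m$ and
\[
\#C(\Fq)\leq\#B(\Fq)=q+1-\Tr(B)\leq q+1-\Tr(A)+m=q+1-\Tr(A)+m\,p_a(C),
\]
since $p_a(C)=1$. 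If $B=A$, then $\Jac(\widetilde C)\sim A\times A'$ with $\dim A'=p_g(C)-2$, so $\#\widetilde C(\Fq)=q+1-\Tr(A)-\Tr(\mathrm{Frob}\mid A')\leq q+1-\Tr(A)+m\,(p_g(C)-2)$ by the Weil estimate on $A'$, and since a rational singular point of $C$ carrying no rational branch contributes to $p_a(C)-p_g(C)$, we get $\#C(\Fq)\leq\#\widetilde C(\Fq)+\big(p_a(C)-p_g(C)\big)\leq q+1-\Tr(A)+m\,p_a(C)$ once more. When $A$ is simple only the case $B=A$ can occur (no elliptic curve in $A$), and it yields the strictly stronger $\#C(\Fq)\leq q+1-\Tr(A)+m\,(p_a(C)-2)+\big(p_a(C)-p_g(C)\big)$; using $C^2=2p_a(C)-2$ (adjunction, $K_A=0$) and the Hodge index inequality $(C\cdot H)^2\geq C^2H^2$, one reformulates this in terms of the degree $C\cdot H$.

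Bound~\eqref{bound_1} now follows by summation over the absolutely irreducible components $C_1,\dots,C_{s_1}$: writing $q+1-\Tr(A)+m\,p_a(C_i)=(q+1-\Tr(A)+m)+m(p_a(C_i)-1)$, using that adjunction on $A$ gives $\sum_i(p_a(\Gamma_i)-1)\leq p_a(D_{\mathrm{red}})-1=\tfrac12 D_{\mathrm{red}}^2\leq\tfrac12 D^2=\tfrac{r^2H^2}{2}$ (because $\Gamma_i^2\geq 0$ and $\Gamma_i\cdot\Gamma_j\geq 0$ on an abelian surface) together with $s_1\leq\sum_i C_i\cdot H\leq rH^2$ ($H$ ample), and absorbing the non absolutely irreducible contributions into the same expression (legitimate since $q+1-\Tr(A)+m\geq 0$ always), one gets $\#D(\Fq)\leq rH^2(q+1-\Tr(A)+m)+\tfrac{r^2mH^2}{2}$. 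For~\eqref{mainbound}, assume $A$ simple and set $y_i=\sqrt{p_a(C_i)-1}$; the hypothesis forces $p_a(C_i)\geq\ell+1$, i.e. $y_i\geq\sqrt\ell$, while Hodge index gives $C_i\cdot H\geq\sqrt{2H^2}\,y_i$, hence $\sum_i y_i\leq rH^2/\sqrt{2H^2}=r\sqrt{H^2/2}$ and $s_1\leq\lfloor r\sqrt{H^2/(2\ell)}\rfloor$. The refined curve estimate makes $\#C_i(\Fq)$ an increasing convex function $\psi(y_i)$, so $\#D(\Fq)$ is at most the maximum of $\sum_i\psi(y_i)$ over $\{y_i\geq\sqrt\ell,\ \sum_i y_i\leq r\sqrt{H^2/2}\}$ and over $s_1$; a sum of convex functions is maximised at a vertex of this polytope — all the $y_i$ but one equal to $\sqrt\ell$, the last as large as allowed — and the resulting expression, again convex in $s_1$, attains its maximum at $s_1\in\{1,\lfloor r\sqrt{H^2/(2\ell)}\rfloor\}$, giving $\varphi(1)$ and $\varphi(\lfloor r\sqrt{H^2/(2\ell)}\rfloor)$; the degenerate configurations (e.g. $D_{\mathrm{red}}$ carrying only non absolutely irreducible components, or small $q$) are dominated by the crude term $\lfloor r\sqrt{H^2/2}\rfloor(\ell-1)$. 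The floor functions in~\eqref{mainbound} come from keeping the intersection numbers $C_i\cdot H$ integral throughout.

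The crux is the curve estimate of the second paragraph and its constants: the saving $-\Tr(A)$ requires a sharp bound $|\Tr(\mathrm{Frob}\mid A')|\leq m\dim A'$ on the complementary factor $A'\subset\Jac(\widetilde C)$ (rather than the trivial $2\sqrt q\dim A'$), and the comparison between $C$ and its smooth model $\widetilde C$ must be handled carefully — this is precisely where the arithmetic genus, not the geometric one, enters, and where the lower order terms of $\varphi$ originate. Once this estimate is in hand the two summations are essentially routine; the only remaining care is the convexity argument identifying the worst configuration of components for~\eqref{mainbound} and the honest treatment of the floors.
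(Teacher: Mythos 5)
Your overall architecture is the paper's: reduce to bounding $\#D(\mathbb{F}_q)$ for $D\geq 0$, $D\sim rH$; bound points on absolutely irreducible curves via the normalisation, the map $\Jac(\widetilde C)\to A$ and a Serre-type trace bound on the complementary factor (this is exactly the content of Haloui's Theorem~4, which the paper simply cites, and your sketch of it, including the $|\#C(\mathbb{F}_q)-\#\widetilde C(\mathbb{F}_q)|\leq p_a-p_g$ comparison, is essentially his proof); bound points on non absolutely irreducible components by $p_a-1$ via conjugate components; control genera and number of components by adjunction, positivity of intersections and Hodge index; then optimise. Your derivation of the first bound \eqref{bound_1} is fine (your use of $\Gamma_i\cdot\Gamma_j\geq 0$ and $\Gamma_i^2\geq 0$ in place of the paper's Hodge-index computation is a harmless variant), and the convexity-at-a-vertex argument you invoke for the second bound is the same optimisation the paper performs, in the paper's case phrased as maximising an explicit function $\phi(k_1,k_2)$ over a polygon.

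The genuine gap is in the reduction to that optimisation for \eqref{mainbound}. You split the components by absolute irreducibility, so the non absolutely irreducible components may have arbitrarily large arithmetic genus, and for those your only estimate is $\#\Gamma(\mathbb{F}_q)\leq p_a(\Gamma)-1$, which can greatly exceed $\ell-1$. Your claim that such contributions are ``absorbed below'' or that the degenerate and mixed configurations are ``dominated by the crude term $\lfloor r\sqrt{H^2/2}\rfloor(\ell-1)$'' is false as stated: a single non absolutely irreducible component of large genus already beats that crude term, and in the mixed case (some absolutely irreducible components of genus $>\ell$ together with other components) you never actually add the second group's contribution into the quantity you maximise, whereas the precise shape of $\varphi$ --- the additive term $r\sqrt{H^2/2}(\ell-1)$ and the coefficient $(\ell-1)(m-\sqrt{\ell})$ --- exists precisely to account for it. The paper avoids this by splitting the components by the genus threshold instead: the group with $\pi_i>\ell$ (absolutely irreducible or not) is estimated by Haloui's bound $q+1-\Tr(A)-2m+m\pi_i$, valid for all $\mathbb{F}_q$-irreducible curves on $A$, while the group with $\pi_i\leq\ell$ is necessarily non absolutely irreducible and contributes at most $\ell-1$ each; then the constraint $k_1\sqrt{\ell}+k_2\leq r\sqrt{H^2/2}$ and the explicit maximisation of $\phi(k_1,k_2)$ over the polygon yield exactly the three terms of \eqref{mainbound}. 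To repair your argument you would either have to adopt this split, or prove a comparison such as $y^2\leq q+1-\Tr(A)-m+my^2$ for the relevant range (which itself needs the integral trace bound and is not automatic for small $q$) together with an honest bookkeeping of the $k_2(\ell-1)$ term inside the vertex evaluation; as written, the passage from your point-count estimates to the stated maximum of $\lfloor r\sqrt{H^2/2}\rfloor(\ell-1)$, $\varphi(1)$ and $\varphi(\lfloor r\sqrt{H^2/(2\ell)}\rfloor)$ is asserted rather than proved.
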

If $A$ is simple, then we can take $\ell = 1$ and the lower bound~(\ref{mainbound}) is nothing but Haloui's one~\cite{Haloui} proved in the case of simple Jacobian surfaces $\Jac(C)$ with the choice $H=C$ (see Remark \ref{comparing_with_Haloui}).

\medskip
 It is worth to notice that the second bound is better for larger $\ell$ (at least for $q$ sufficiently large and $1<r<\sqrt{q}$, see Remark~\ref{mieux_ell}). In particular, the bound obtained for $\ell=2$ improves the one obtained for $\ell=1$.
This leads us to 
investigate the case of abelian surfaces 
with no absolutely irreducible curves of genus 1 nor 2, which are necessarily either Weil restrictions of elliptic curves on a quadratic extension, either not principally polarizable abelian surfaces, from the classification given above. 
The following proposition
lists all situations for which we can apply
bound~(\ref{mainbound}) with $\ell=2$.
The key point of the proof 
is a characterisation of isogeny classes of abelian surfaces
containing Jacobians of curves of genus $2$ 
obtained by Howe, Nart and Ritzenthaler (\cite{Howe_Nart_Ritzenthaler}).

\begin{prop}(Proposition \ref{nppas} and Proposition \ref{without_curves}) 
The bound on the minimum distance (\ref{mainbound}) of the previous theorem holds when taking  $\ell=2$ 
in the two following cases:
\begin{enumerate}[(i)]
\item Let $A$ be an abelian surface defined over $\mathbb{F}_q$ which does not admit a principal polarization. Then $A$ does not contain absolutely irreducible curves of arithmetic genus $0$, $1$ nor $2$.

\item Let $q$ be a power of a prime $p$. Let $E$ be an elliptic curve defined over $\mathbb{F}_{q^2}$ of Weil polynomial $f_{E/\mathbb{F}_{q^2}}(t)=t^2-\Tr(E/\mathbb{F}_{q^2})t+q^2$. Let $A$ be the $\mathbb{F}_{q^2}/\mathbb{F}_{q}$-Weil restriction of the elliptic curve $E$. Then $A$ does not contain absolutely irreducible curves defined over $\mathbb{F}_{q}$  of arithmetic genus $0$, $1$ nor $2$ if and only if one of the following cases holds:
\begin{enumerate}[(1)]
\item $\Tr(E/\mathbb{F}_{q^2})=2q-1$;
\item $p>2$ and $\Tr(E/\mathbb{F}_{q^2})=2q-2$;
\item $p \equiv 11 \mod 12$ or $p=3$, $q$ is a square and $\Tr(E/\mathbb{F}_{q^2})=q$;
\item $p=2$, $q$ is nonsquare and $\Tr(E/\mathbb{F}_{q^2})=q$;
\item\label{five} $q=2$ or $q=3$ and $\Tr(E/\mathbb{F}_{q^2})=2q$.
\end{enumerate}
\end{enumerate}
\end{prop}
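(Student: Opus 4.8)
The plan is to treat the two parts of the proposition separately, since they rely on rather different inputs. For part (i), the key observation is that an abelian surface $A$ cannot contain an absolutely irreducible curve $C$ of arithmetic genus $0$ or $1$: a genus $0$ curve would be rational, hence $A$ would contain a copy of $\mathbb{P}^1$, contradicting the fact that abelian varieties contain no rational curves; and a smooth genus $1$ curve in $A$ generates an elliptic subcurve, which would force $A$ to be non-simple but would not by itself contradict non-principal-polarizability — so here one argues instead via the classification recalled in the introduction. The cleanest route is: if $A$ contained an absolutely irreducible curve of arithmetic genus $\leq 2$, then either its normalization has genus $\leq 2$ and the Jacobian of that normalization (a curve of genus $1$ or $2$) maps nontrivially to $A$; chasing this through, $A$ would be isogenous to a product involving an elliptic curve or to a Jacobian of a genus $2$ curve, and in all such cases $A$ \emph{does} admit a principal polarization (Jacobians are canonically principally polarized, products of principally polarized abelian varieties are principally polarized). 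The contrapositive gives (i). The main subtlety is handling singular curves of arithmetic genus $2$ whose normalization has genus $0$ or $1$, and curves of geometric genus $2$ whose Jacobian need not be simple; one must check the map from the normalization's Jacobian to $A$ is an isogeny or that its image provides the required polarization — this is the step I expect to require the most care.

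For part (ii), I would set $A = \mathrm{Res}_{\mathbb{F}_{q^2}/\mathbb{F}_q}(E)$ and first record its Weil polynomial over $\mathbb{F}_q$: writing $\Tr(E/\mathbb{F}_{q^2}) = s$, the characteristic polynomial of Frobenius of $A/\mathbb{F}_q$ is obtained from $t^2 - st + q^2$ by the standard Weil-restriction formula, giving a degree-$4$ polynomial whose coefficients are explicit in $s$ and $q$. By the general bound of the theorem and part (i)'s method, the obstruction to applying~(\ref{mainbound}) with $\ell = 2$ is precisely the presence of an absolutely irreducible $\mathbb{F}_q$-curve of arithmetic genus $0$, $1$, or $2$ on $A$. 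Genus $0$ and $1$ curves are ruled out exactly when $A$ is simple over $\mathbb{F}_q$ (no elliptic subcurve), which translates into an irreducibility/non-ordinary-factorization condition on the quartic Weil polynomial — a finite case analysis in $s$. The genus $2$ case is where the cited Howe--Nart--Ritzenthaler characterization enters: $A$ contains an $\mathbb{F}_q$-rational Jacobian of a genus $2$ curve if and only if its isogeny class satisfies their explicit criterion (stated in terms of the Weil polynomial and, in some boundary cases, additional invariants modulo $p$). I would substitute the Weil-restriction quartic into that criterion and solve for the values of $s$ for which no genus $2$ Jacobian appears.

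The heart of the proof is therefore the case analysis: one parametrizes $s = \Tr(E/\mathbb{F}_{q^2})$ over its Weil range $|s| \leq 2q$, computes the resulting Weil polynomial of $A$, and for each $s$ decides (a) whether $A$ is simple over $\mathbb{F}_q$ and (b) whether its isogeny class contains a genus $2$ Jacobian, using Howe--Nart--Ritzenthaler. The surviving values of $s$ — those for which both answers are favorable — should be exactly the list (1)--(5). I expect cases (3) and (4), where $s = q$ and the splitting behavior depends on $p \bmod 12$ and on whether $q$ is a square, to be the most delicate: here the quartic may factor or stay irreducible depending on arithmetic of $p$, the curve $E$ may become supersingular, and the Howe--Nart--Ritzenthaler obstruction has extra terms; likewise case (5) with $q \in \{2,3\}$ will need to be checked by hand against small-field tables. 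The boundary values $s = 2q-1$ and $s = 2q-2$ (cases (1)--(2)) should be comparatively routine: there $A$ is ordinary and simple, and the genus $2$ Jacobian criterion fails for transparent numerical reasons.

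Finally, I would assemble the two parts: (i) is immediate from the polarization argument, and (ii) follows by combining the Weil-restriction computation, the simplicity criterion, and the Howe--Nart--Ritzenthaler characterization, with the main bound~(\ref{mainbound}) at $\ell = 2$ then applying verbatim via the theorem once the absence of low-genus curves is established. The one genuine obstacle, beyond bookkeeping, is ensuring the genus $2$ analysis is exhaustive — in particular not overlooking singular genus $2$ curves or genus $2$ curves with reducible Jacobian — so I would state and use a clean lemma to the effect that an abelian surface over $\mathbb{F}_q$ contains an absolutely irreducible $\mathbb{F}_q$-curve of arithmetic genus $\leq 2$ if and only if it is non-simple or isogenous to a genus $2$ Jacobian, reducing everything to the isogeny class.
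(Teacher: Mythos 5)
Your overall strategy is the paper's: reduce everything to the isogeny class via a normalization argument plus Galbraith--Smart (your proposed ``clean lemma'' that an abelian surface contains an absolutely irreducible curve of arithmetic genus $\leq 2$ if and only if it is non-simple or isogenous to a Jacobian surface is exactly the paper's Lemma~\ref{petit_pi_petit_g}), and then, for part (ii), substitute $(a,b)=(0,-\Tr(E/\mathbb{F}_{q^2}))$ into the Howe--Nart--Ritzenthaler classification. However, there is a genuine gap in your part (i). You conclude by asserting that if $A$ is isogenous to a product involving an elliptic curve or to a Jacobian, then $A$ itself admits a principal polarization because Jacobians and products of principally polarized varieties do. Admitting a principal polarization is \emph{not} an isogeny invariant, so this step fails as written. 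The paper avoids the issue by formulating and proving the statement at the level of isogeny classes (Proposition~\ref{nppas} assumes $A$ lies in a non-principally polarizable \emph{isogeny class}): then ``isogenous to a product of elliptic curves or to a Jacobian'' directly contradicts the hypothesis, and the conclusion follows from Lemma~\ref{petit_pi_petit_g}. Relatedly, the subtlety you flag but do not resolve --- singular curves of arithmetic genus $2$ whose normalization has genus $0$ or $1$ --- is handled in the paper by simplicity: on a simple abelian surface the geometric genus of an absolutely irreducible curve is at least $2$ (Haloui), and since it is at most the arithmetic genus $2$, the curve is smooth and Galbraith--Smart applies. Your sketch never closes this loop.

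For part (ii), the plan is correct in outline but the substantive content is deferred: the proposition \emph{is} the matching of the trace values against Table 1.2 of Howe--Nart--Ritzenthaler, and you only assert that the surviving values ``should be'' the list (1)--(5) without carrying out the verification. You also omit one ingredient the paper needs in the forward direction: among the rows of Table 1.2 (simple surfaces not isogenous to a Jacobian) one row consists of the non-principally polarizable classes (identified via their Theorem 1.4), and to conclude that a trace outside (1)--(5) forces a low-genus curve one must rule this row out for $A$; the paper does so by observing that a Weil restriction of an elliptic curve carries a principal polarization. Without that observation (or the Deuring-type argument that no elliptic curve over $\mathbb{F}_{q^2}$ has the corresponding trace when $p\equiv 1 \bmod 3$), your case analysis would not be exhaustive.
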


The paper is structured as follows.
In Section \ref{Codes_from_Abelian_Surfaces} 
we consider evaluation codes 
on general abelian surfaces. 
We compute their dimension and prove the lower bound~(\ref{bound_1}) on their minimum distance.
Section \ref{Codes_from_Simple_Abelian_Surfaces} is devoted to 
the case of simple abelian surfaces,
 that is those for which we can choose some $\ell \geq 1$. 
We derive the lower bound~(\ref{mainbound}) 
depending on the minimum arithmetic genus of absolutely irreducible curves lying on the surface.
In Section \ref{Weil_restriction_of_elliptic_curves} we consider
abelian surfaces which do not admit a principal polarization and
Weil restrictions of elliptic curves to find all
abelian surfaces defined over a finite field containing no absolutely irreducible curves of arithmetic genus $0$, $1$ and $2$. 
Finally, in Section \ref{explicit}, we make explicit the lower bounds for the minimum distance.

\section{Codes from abelian surfaces}\label{Codes_from_Abelian_Surfaces}

\subsection{Some facts on intersection theory}
One of the ingredients of the proofs of
Theorems \ref{ourbound} and \ref{minimum_distance_simple_surfaces}
is the classical inequality induced by the Hodge index theorem (\ref{Hodge})
in the context of intersection theory on surfaces. 
In this subsection, we briefly recall this context and the main properties
we need.
We refer the reader to \cite[\S V]{Hartshorne} for further details.

Let $X$ be a nonsingular, projective, absolutely irreducible 
algebraic surface defined over ${\mathbb F}_q$. A divisor on $X$ is an element
of the free abelian group generated by the  irreducible curves on $X$. Divisors associated to  rational functions on $X$ are called principal. Two divisors on $X$ are said to be {\sl linearly equivalent} if their difference is a principal divisor. We write $\Pic(X)$ for the group of divisors of $X$ modulo linear equivalence. The N\'eron-Severi group of $X$, denoted by $\NS(X)$, is obtained by considering the coarser {\sl algebraic equivalence} we do not define here since it coincides for abelian varieties (see \cite[\S IV]{Lang}) with the following {\sl numerical equivalence}.
A divisor $D$ on $X$ is said to be numerically equivalent to zero, which we denote by $D\equiv 0$, if the intersection product $C.D$ is zero for all curves $C$ on $X$. This gives the coarsest equivalence relation on divisors on $X$ and we denote the group of divisors modulo numerical equivalence by $\Num(X)$.
We have thus $\Num(X)=\NS(X)$, so we will refer to these two equivalence relations with no distinction. We write simply $D$ for the class of a divisor $D$ in $\NS(X)$.

We recall the Nakai-Moishezon criterion in the context of surfaces: a divisor $H$ is ample if and only if $H^2>0$ and  $H.C>0$ for all irreducible curves $C$ on $X$ (\cite[\S V, Th.1.10]{Hartshorne}).
The Hodge index theorem states
 that the intersection pairing is negative definite on the orthogonal complement
 of the line generated by an ample divisor. From this, it easily follows that 
\begin{equation}\label{Hodge} 
H^2D^2 \leq (H.D)^2
\end{equation}
for any pair of divisors $D$, $H$ with $H$ ample, and that equality holds if and only if $D$ and $H$ are numerically proportional.
\subsection{Evaluation codes}\label{section-evcode}
This subsection begins 
by a reminder about
definitions of the evaluation code 
we study.
To this end
we consider again
$X$ a nonsingular, projective, absolutely irreducible 
algebraic surface defined over ${\mathbb F}_q$
and $G$ a divisor on $X$.
The Riemann-Roch space $L(G)$ is defined by
$$L(G)=\{f\in\mathbb{F}_q(X)\setminus \{0\} \ \vert \ (f)+G\geq 0 \}\cup \{0\}.$$
The Algebraic Geometry
code $\mathcal{C}(X,G)$ is sometimes presented 
from a functional point of view
as the image of the following linear evaluation map $\ev$
\begin{center}
\begin{align*}
\ev : \quad L(&G) \longrightarrow \mathbb{F}^n_q&\\
&f\;\; \longmapsto (f(P_1),\dots,f(P_n))&
\end{align*}
\end{center}
which is clearly well defined when considering
 $\{P_1,\dots,P_n\} \subset X(\mathbb{F}_q)$ a subset of rational points 
which are on $X$ but not in the support of $G$.
In fact, this construction 
naturally extends to the case where
$\{P_1,\dots,P_n\} =X(\mathbb{F}_q)$ is an enumeration 
of the whole set of the rational points on $X$,
 as noticed by Manin and Vl\u{a}du\c{t} 
 in \cite[\S 3.1]{Manin}.
Indeed, one can rather consider the image of the following map,
where we denote by $ \mathcal{L}$ the line bundle associated to $L(G)$,
by $\mathcal{L}_{P_i}$ the stalks at the $P_i$'s, and by $s_{P_i}$ 
the images of a global section $s \in H^0\left( X,  \mathcal{L} \right)$ in the stalks
\begin{center}
\begin{align*}
\ev : H^0\left( X,  \mathcal{L} \right)  \longrightarrow \bigoplus_{i=1}^n \mathcal{L}_{P_i} = \mathbb{F}^n_q& \\
s\;\; \longmapsto (s_{P_1},\dots,s_{P_n}) .&
\end{align*}
\end{center}
Different choices of isomorphisms between 
the fibres $\mathcal{L}_{P_i}$ and $\mathbb{F}_q$ give rise to different maps 
but lead to equivalent codes.
See also \cite{Lachaud} or \cite{Aubry}
for another constructive point of view.

Throughout the whole paper we  associate
to a nonzero
function $f\in L(G)$ an effective rational divisor
\begin{equation}\label{decomposition}
D:=G+(f)=\sum_{i=1}^k n_i D_i,
\end{equation}
where $n_i>0$ and  where
each $D_i$ is an ${\mathbb F}_q$-irreducible curve whose arithmetic genus is denoted by $\pi_i$. 
The evaluation map $\ev$ is injective if and only if
  the number $N(f)$ of zero coordinates of the codeword $\ev(f)$ satisfies
\begin{equation}\label{injective}
N(f)<\# X({\mathbb F}_q)
\end{equation}
for any $f \in L(G) \setminus \{ 0 \} $.
In this case the minimum distance $d(X,G)$ of the code $\mathcal{C}(X,G)$ satisfies
\begin{equation}\label{minimumdistance} 
d(X,G) = \#X(\mathbb{F}_q)-\max_{f\in L(G)\setminus \{0\}} N(f).
\end{equation}
Let us remark now that by (\ref{decomposition}) we have 
\begin{equation}\label{Nf_sum} 
N(f)\leq \sum_{i=1}^k \# D_i(\mathbb{F}_q)
\end{equation}
for any $f\in L(G) \setminus \{ 0 \}$.
Therefore, to get a lower bound on the minimum distance of the code $\mathcal{C}(X,G)$ 
it suffices to get two upper bounds:
\begin{itemize}
\item  an upper bound
on the number $k$ of ${\mathbb F}_q$-irreducible components 
of an effective divisor linearly equivalent to $G$
$$D=\sum_{i=1}^k n_i D_i \sim G;$$
\item an upper bound on 
the number of rational points on each ${\mathbb F}_q$-irreducible curves $D_i$ in the support of $D$.
\end{itemize}

\subsection{The parameters of codes over abelian surfaces}
In this subsection we begin 
the estimation of the parameters
of the code in the context of our work.

Let $A$ be
an abelian surface defined over ${\mathbb F}_q$. We recall that the Weil polynomial of an abelian variety is the characteristic polynomial of the Frobenius endomorphism acting on its Tate module. Since $A$ is here 
two-dimensional, 
it has by Weil theorem the shape
\begin{equation}\label{weil_polynomial}
f_A(t)=t^4-\Tr(A)t^3+a_2t^2-q\Tr(A)t+q^2.
\end{equation}
By the Riemann Hypothesis $f_A(t)=(t-\omega_1)(t-\overline{\omega}_1)(t-\omega_2)(t-\overline{\omega}_2)$ where $\omega_i$ are complex numbers of modulus $\sqrt{q}$.
The number $\Tr(A)=\omega_1+\overline{\omega}_1+\omega_2+\overline{\omega}_2$ is called the {\sl trace of $A$}.

Let  $H$ be
 an ample divisor on $A$ rational over $\mathbb{F}_q$ and $r$ large enough  so that $rH$ is very ample
($r\geq 3$ is sufficient by \cite[III, \S 17]{Mumford}). 
Our goal is
 to derive from (\ref{minimumdistance})
 a lower bound on the minimum distance of the code $\mathcal{C}(A,rH)$. 

If the evaluation map $\ev$ is injective, then the dimension of $\mathcal{C}(A,rH)$ is equal to the dimension $\ell(rH)$ of the Riemann-Roch space $L(rH)$ 
which can be computed using 
the Riemann-Roch theorem for surfaces. 
In the general setting of a divisor $D$ on a surface $X$
it states that (see \cite[V, \S 1]{Hartshorne})
$$\ell(D)-s(D)+\ell (K_X-D)=\frac{1}{2}D.(D-K_X)+1+p_a(X)$$
where $K_X$ is the canonical divisor of $X$ and $p_a(X)$ is the arithmetic genus of $X$, and where $s(D)=\dim_{\mathbb{F}_q} H^1(X,\mathcal{L}(D))$ is the so-called {\sl superabundance} of $D$ in $X$.

Since $A$ is an abelian surface we have
(\cite[III, \S 16]{Mumford}) $K_A=0$ and $p_a(A)=-1$. Moreover, if $rH$ is very ample, then we can deduce from \cite[V, Lemma 1.7]{Hartshorne} that $\ell (K-rH)=\ell (-rH)=0$ and that $s(rH)=0$ (\cite[III, \S 16]{Mumford}). 
So finally if the evaluation map $\ev$ is injective, i.e.\ if inequality (\ref{injective}) holds, we get the dimension of the code $\mathcal{C}(A,rH)$:
$$\dim_{\mathbb{F}_q} L(rH)=r^2\frac{H^2}{2}.$$

We are now going to give a lower bound on the minimum distance of $\mathcal{C}(A,rH)$ using (\ref{minimumdistance}) and (\ref{Nf_sum}).
Theorem 4 of \cite{Haloui} states that the number of rational points on a projective ${\mathbb F}_q$-irreducible curve $D$ defined over $\mathbb{F}_q$ of arithmetic genus $\pi$ lying on an abelian surface $A$ of trace $\Tr(A)$ is bounded by
$$\#D(\mathbb{F}_q)\leq q+1-\Tr(A)+|\pi-2| \lfloor 2\sqrt{q}\rfloor.$$
Hence, if we set $m:=\lfloor 2\sqrt{q}\rfloor$, where $\lfloor x\rfloor$ denotes the integer part of the real number $x$, from inequality (\ref{Nf_sum}) we get 
\begin{equation}\label{Nf}
N(f)\leq k(q+1-\Tr(A))+m \sum_{i=1}^k |\pi_i-2|.
\end{equation}
With no hypotheses on the abelian surface nor on the arithmetic genera $\pi_i$, we can only say that $\pi_i = 0$ cannot occur and since $\pi_i\geq |\pi_i-2|$ for $\pi_i\geq1$, we have
\begin{equation}\label{Nf_toutes}
N(f)\leq k(q+1-\Tr(A))+m \sum_{i=1}^k \pi_i.
\end{equation}
In order to use (\ref{Nf_toutes}) to bound the minimum distance of the code $\mathcal{C}(A,rH)$, we need Lemma \ref{HodgeTGV} below, giving upper bounds on the number $k$ of irreducible components of the effective divisor $D$ linearly equivalent to $rH$ and on the sum of the arithmetic genera of its components $D_i$.
We recall for this purpose  a generalisation of the adjunction formula which states that for a curve $D$ of arithmetic genus $\pi$ on a surface $X$ we have $D.(D+K_X)=2\pi-2$ (\cite[\S V, Exercise 1.3]{Hartshorne}). In the case of an abelian surface $A$ for which $K_A=0$, this says that for any curve $D$ of arithmetic genus $\pi$ lying on $A$ we have $D^2=2\pi-2$. 

\begin{lemma}\label{HodgeTGV}

Let $D$ be an effective divisor linearly equivalent to $rH$, let $D=\sum^k_{i=1} n_iD_i$
be its decomposition as a sum of ${\mathbb F}_q$-irreducible curves and let $\pi_i$ be the arithmetic genus of $D_i$ for $i=1,\dots,k$. Then we have

$$\sum_{i=1}^k \pi_i\leq r^2\frac{H^2}{2}+k \quad\text{ and  }\quad k\leq rH^2.$$

\end{lemma}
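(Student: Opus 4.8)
The plan is to exploit the two structural facts recalled just above the statement: the adjunction formula on an abelian surface gives $D_i^2 = 2\pi_i - 2$ for each irreducible component, and the Hodge index inequality (\ref{Hodge}) controls intersection numbers against the ample divisor $H$. First I would establish the bound on $k$. Since $D = \sum_{i=1}^k n_i D_i \sim rH$ with all $n_i \geq 1$, intersecting with the ample divisor $H$ gives $rH^2 = H.D = \sum_{i=1}^k n_i (H.D_i) \geq \sum_{i=1}^k H.D_i \geq k$, because each $D_i$ is an irreducible curve and hence $H.D_i \geq 1$ by the Nakai–Moishezon criterion (the intersection number being a positive integer). This yields $k \leq rH^2$ immediately.

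Next I would bound $\sum_i \pi_i$. By adjunction on $A$ we have $\pi_i = 1 + \tfrac{1}{2}D_i^2$, so $\sum_{i=1}^k \pi_i = k + \tfrac{1}{2}\sum_{i=1}^k D_i^2$, and it suffices to show $\sum_{i=1}^k D_i^2 \leq r^2 H^2$. The natural route is to compare $\sum_i D_i^2$ with $(\sum_i D_i)^2 = \sum_i D_i^2 + 2\sum_{i<j} D_i.D_j$. Since distinct $\mathbb{F}_q$-irreducible curves on a surface meet properly, $D_i.D_j \geq 0$ for $i \neq j$, so $\sum_i D_i^2 \leq (\sum_i D_i)^2$. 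Writing $D' := \sum_{i=1}^k D_i$ (the reduced divisor underlying $D$), we have $D' \leq D$ in the effective sense and hence $H.D' \leq H.D = rH^2$; applying the Hodge index inequality (\ref{Hodge}) to the pair $(D', H)$ gives $H^2 (D')^2 \leq (H.D')^2 \leq (rH^2)^2$, whence $(D')^2 \leq r^2 H^2$ (dividing by $H^2 > 0$). Combining, $\sum_i D_i^2 \leq (D')^2 \leq r^2 H^2$, and therefore $\sum_i \pi_i \leq k + r^2 \tfrac{H^2}{2}$, as claimed.

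The step I expect to require the most care is the chain $H.D' \leq H.D$ together with the application of Hodge index to $D'$ rather than to $D$ itself: one must be sure that passing from $D$ to its reduced part $D'$ (i.e.\ setting all multiplicities $n_i$ to $1$) only decreases the intersection with $H$, which is clear since $H.D - H.D' = \sum_i (n_i - 1)(H.D_i) \geq 0$, but it is worth stating explicitly. One should also note that (\ref{Hodge}) is valid for arbitrary divisor classes $D'$, not just effective or irreducible ones, so no positivity of $(D')^2$ is needed a priori — the inequality $H^2(D')^2 \leq (H.D')^2$ holds unconditionally once $H$ is ample. Everything else is a routine combination of the Nakai–Moishezon criterion, the effectivity-driven inequalities $n_i \geq 1$ and $D_i.D_j \geq 0$, and adjunction on the abelian surface; no delicate case analysis is involved.
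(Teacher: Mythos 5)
Your proof is correct, and for the first inequality it takes a genuinely different (though closely related) route from the paper's. The paper applies the Hodge index inequality (\ref{Hodge}) to each component separately, getting $\pi_i-1\leq (D_i.H)^2/(2H^2)$, sums, and then bounds $\sum_i (D_i.H)^2$ by expanding $\bigl(\sum_i D_i.H\bigr)^2$ and discarding the cross terms $(D_i.H)(D_j.H)\geq 0$, which only requires the Nakai--Moishezon positivity $D_i.H>0$. You instead work with the reduced divisor $D'=\sum_i D_i$: you compare $\sum_i D_i^2$ with $(D')^2$ by discarding the cross terms $D_i.D_j\geq 0$, and then apply (\ref{Hodge}) a single time to $D'$ together with $0<H.D'\leq H.D=rH^2$. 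The extra ingredient you need is that distinct ${\mathbb F}_q$-irreducible curves have nonnegative intersection; this is true, but it deserves the one-line justification that two distinct ${\mathbb F}_q$-irreducible curves share no geometric component (their $\overline{\mathbb F}_q$-components form single Galois orbits, so a common component would force the curves to coincide), whereas the paper's componentwise use of Hodge sidesteps this point entirely and only ever intersects against the ample $H$. Your argument is slightly more economical in its use of the Hodge index inequality (one application instead of $k$), the paper's is slightly lighter on geometric input; both, together with adjunction $D_i^2=2\pi_i-2$ and the identical argument $k\leq\sum_i n_iD_i.H=rH^2$ for the second bound, yield exactly the stated inequalities.
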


\begin{proof}
Applying Formula (\ref{Hodge}) to $H$ and $D_i$ for every $i$, we get
$D_i^2H^2\leq(D_i.H)^2 .$
By the adjunction formula we have

\begin{equation}\label{sum_pi}
\pi_i-1\leq (D_i.H)^2/(2H^2).
\end{equation}
Indeed $H^2> 0$ by the Nakai-Moishezon criterion since $H$ is ample.

Summing from $i=1$ to $k$, we obtain
\begin{equation}\label{cor_11}
\sum_{i=1}^k \pi_i-k\leq \frac{1}{2H^2}\sum_{i=1}^k(D_i.H)^2.
\end{equation}
We have also
\begin{equation}\label{TGV}
\begin{split}
\sum_{i=1}^k(D_i.H)^2&=\left(\sum_{i=1}^kD_i.H \right)^2-\sum^k_{i\neq j}(D_i.H)(D_j.H)\\
&\leq \left(\sum_{i=1}^kn_iD_i.H \right)^2-\sum_{i\neq j}^k(D_i.H)(D_j.H)\\
&\leq r^2(H^2)^2,
\end{split}
\end{equation}
where we used the facts that $n_i>0$, that $D=\sum_{i=1}^k n_i D_i$ is linearly (and hence numerically) equivalent to $rH$ and that $D_i.H>0$ for every $i=1,\dots,k$, thanks to Nakai-Moishezon criterion since $H$ is ample.
Now applying inequality (\ref{TGV}) to inequality (\ref{cor_11}), we get
$$  \sum_{i=1}^k \pi_i \leq \frac{r^2(H^2)^2}{2H^2}+k=\frac{r^2H^2}{2}+k$$
which completes the proof of the first statement.
Using that $k\leq \sum_{i=1}^k n_i D_i.H=rH^2$ we get the second one.

\end{proof}

As a consequence of Lemma \ref{HodgeTGV} we can state the following theorem.

\begin{theorem}\label{ourbound}
Let $A$ be an abelian surface defined over $\mathbb{F}_q$ of trace $\Tr(A)$. 
Let $m=\lfloor 2\sqrt{q}\rfloor$, $H$ be an ample divisor on $A$
rational over $\mathbb{F}_q$ and $r$ be a positive integer large enough  so that $rH$ is very ample.

Then the minimum distance $d(A,rH)$ of the code $\mathcal{C}(A,rH)$ satisfies
$$
d(A,rH)\geq  \#A(\mathbb{F}_q)-rH^2(q+1-\Tr(A)+m)-r^2m\frac{H^2}{2}. 
$$
\end{theorem}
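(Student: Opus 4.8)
The plan is to combine the point-count inequality~(\ref{Nf_toutes}) with the two estimates from Lemma~\ref{HodgeTGV} and then feed the result into~(\ref{minimumdistance}). First I would start from
\[
N(f)\leq k(q+1-\Tr(A))+m\sum_{i=1}^k\pi_i
\]
valid for every nonzero $f\in L(rH)$, where $D=rH+(f)=\sum_{i=1}^k n_iD_i$ with the $D_i$ the ${\mathbb F}_q$-irreducible components and $\pi_i$ their arithmetic genera. The point is that the right-hand side is increasing in $\sum\pi_i$ (since $m\geq 0$) and, once $\sum\pi_i$ is replaced by its upper bound, it becomes increasing in $k$ as well (provided the coefficient $q+1-\Tr(A)+m$ of $k$ is nonnegative, which holds because $q+1-\Tr(A)\geq q+1-4\sqrt q\geq -m$ by the Weil bound $|\Tr(A)|\leq 4\sqrt q$ — I would note this small point so the monotonicity argument is clean).

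Next I would substitute the first bound of Lemma~\ref{HodgeTGV}, namely $\sum_{i=1}^k\pi_i\leq r^2H^2/2+k$, to get
\[
N(f)\leq k(q+1-\Tr(A))+m\left(r^2\frac{H^2}{2}+k\right)=k\bigl(q+1-\Tr(A)+m\bigr)+r^2m\frac{H^2}{2}.
\]
Then, using $k\leq rH^2$ (the second bound of the lemma) together with the nonnegativity of the coefficient of $k$, this is at most
\[
rH^2\bigl(q+1-\Tr(A)+m\bigr)+r^2m\frac{H^2}{2}.
\]
Since this bound is uniform in $f\in L(rH)\setminus\{0\}$, it bounds $\max_f N(f)$.

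Finally I would invoke the injectivity of the evaluation map: the displayed uniform bound on $N(f)$ is in particular less than $\#A({\mathbb F}_q)$ under the hypotheses (one should check, or simply observe, that the stated lower bound is vacuous otherwise, so there is no loss), hence~(\ref{injective}) holds and~(\ref{minimumdistance}) applies, giving
\[
d(A,rH)=\#A({\mathbb F}_q)-\max_{f\in L(rH)\setminus\{0\}}N(f)\geq \#A({\mathbb F}_q)-rH^2\bigl(q+1-\Tr(A)+m\bigr)-r^2m\frac{H^2}{2},
\]
which is exactly the claimed inequality. The argument is essentially a chain of substitutions, so there is no real obstacle; the only subtlety to be careful about is the sign of $q+1-\Tr(A)+m$ when passing from a bound in terms of $k$ to the bound $k\leq rH^2$, and the parallel issue of making sure the monotonicity in $\sum\pi_i$ and in $k$ is used in the right order — handling the case $\Tr(A)$ close to $4\sqrt q$ is where one must be slightly attentive rather than fully automatic.
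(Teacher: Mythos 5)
Your argument is correct and is essentially the paper's own proof: combine (\ref{Nf_toutes}) with the two bounds of Lemma \ref{HodgeTGV}, observe that the coefficient $q+1-\Tr(A)+m$ of $k$ is nonnegative (the paper phrases this as the linear function $\phi(k)=k(q+1-\Tr(A)+m)+mr^2H^2/2$ being increasing on $[1,rH^2]$ because $\vert\Tr(A)\vert\leq 4\sqrt q$), evaluate at $k=rH^2$, and conclude via (\ref{minimumdistance}). One microscopic caveat: your intermediate inequality $q+1-4\sqrt q\geq -m$ literally fails for $q=2$ (there $3-4\sqrt2\approx-2.66<-2=-m$), but the needed nonnegativity of $q+1-\Tr(A)+m$ still holds in that case because $\Tr(A)$ is an integer, hence $\Tr(A)\leq\lfloor 4\sqrt 2\rfloor=5$.
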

\begin{proof}
Using Lemma~\ref{HodgeTGV} together with (\ref{Nf_toutes}) we get $N(f)\leq \phi(k)$
with $\phi(k):=k(q+1-\Tr(A)+m)+mr^2H^2/2 \text{ and } k\in[1,rH^2].$
This means that $N(f)\leq \max_{k\in[1,rH^2]}\{ \phi(k)\}$.
Now remark that 
$ \phi$ is an increasing linear function
since $\vert\Tr(A)\vert \leq 4\sqrt{q}$, and hence gets its maximum when $k=rH^2$. Therefore we have $N(f)\leq \phi\left(rH^2\right)$, which implies $d(A,rH)= \#A(\mathbb{F}_q)-\max \{N(f), f\in L(rH)\setminus \{0\}\} \geq \#A(\mathbb{F}_q)-\phi\left(rH^2\right)$.
The theorem is proved since $ \phi\left(rH^2\right)=rH^2(q+1-\Tr(A)+m)+mr^2H^2/2$.
\end{proof}

\begin{remark}
Let $H$ be an ample divisor. Suppose that $H$ is irreducible over $\mathbb{F}_q$, but reducible on a Galois extension of prime degree $e$. Then $H$ is a sum of $e$ conjugate irreducible components such that the intersection points are also conjugates under the Galois group. Then, by Lemma 2.3 of \cite{Voloch_Zarzar}, we have
$$k\leq r\frac{H^2}{e}.$$
Hence under this hypothesis we get a sharper bound on the number of irreducible components of a divisor linearly equivalent to $rH$, thus a sharper bound for Theorem \ref{ourbound}.
\end{remark}

\section{Codes from abelian surfaces with no small genus curves}\label{Codes_from_Simple_Abelian_Surfaces}

We consider now evaluation codes $\mathcal{C}(A,rH)$ on abelian surfaces which contain no absolutely irreducible curves defined over $\mathbb{F}_q$ of arithmetic genus smaller than or equal to an integer $\ell$.

Throughout this section $A$ denotes a {\sl simple} abelian surface defined over $\mathbb{F}_q$.
Let us remark that by Proposition 5 of \cite{Haloui} a simple abelian surface contains no irreducible curves of arithmetic genus $0$ nor  $1$  defined over $\mathbb{F}_q$.
In particular, every absolutely irreducible curve on $A$ has arithmetic genus greater than or equal to $2$ and thus it is relevant to take $\ell\geq 1$.

\begin{lemma}\label{Borne_N(f)}
Let $A$ be a simple abelian surface defined over $\mathbb{F}_q$ of trace $\Tr (A)$. Let $\ell$ be a positive integer such that for every absolutely irreducible curves of arithmetic genus $\pi$ lying on $A$ we have $\pi>\ell$. Let $f$ be a nonzero function in $L(rH)$ with associated effective rational divisor $D=\sum_{i=1}^k n_iD_i$ as given in equation (\ref{decomposition}). Write $k=k_1+k_2$ where $k_1$ is the number of $D_i$ which have arithmetic genus $\pi_i > \ell$ and $k_2$ is the number of $D_i$ which have arithmetic genus $\pi_i \leq \ell$. Then
\begin{equation}\label{Nf_simple}
N(f)\leq k_1(q+1-\Tr(A)-2m)+m \sum_{i=1}^{k_1} \pi_i+k_2(\ell-1),
\end{equation}
where $\pi_i > \ell$ and where $\sum_{i=1}^{k_1}\pi_i$ is supposed to be zero if $k_1=0$.
\end{lemma}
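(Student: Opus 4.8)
The plan is to split the sum $\sum_{i=1}^k \# D_i(\mathbb{F}_q)$ from~(\ref{Nf_sum}) according to the arithmetic genus of each component, and to bound the two pieces by different means. For the $k_1$ components $D_i$ with $\pi_i > \ell$, I would use Haloui's point count bound $\#D_i(\mathbb{F}_q)\leq q+1-\Tr(A)+|\pi_i-2|m$ exactly as in the derivation of~(\ref{Nf}). Since $A$ is simple, every such $D_i$ has $\pi_i\geq 2$ (by Proposition 5 of~\cite{Haloui}, recalled just above, and since $\ell\geq 1$), so $|\pi_i-2| = \pi_i-2$, giving $\#D_i(\mathbb{F}_q)\leq q+1-\Tr(A)-2m+m\pi_i$. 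Summing over the $k_1$ such indices yields the first two terms $k_1(q+1-\Tr(A)-2m)+m\sum_{i=1}^{k_1}\pi_i$ of~(\ref{Nf_simple}).

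For the $k_2$ components $D_i$ with $2\le\pi_i\leq\ell$, the estimate of the previous paragraph would still apply but is wasteful; instead I would bound $\#D_i(\mathbb{F}_q)$ directly. Here I expect the key input is that an absolutely irreducible curve of arithmetic genus $\pi$ has at most... well, the point is that for these components one should use a bound that does not blow up with $\Tr(A)$. The natural choice is: a geometrically irreducible projective curve of arithmetic genus $\pi$ has at most $\pi$ rational points coming from singularities plus the rational points of its normalization, but the cleanest uniform statement giving $\#D_i(\mathbb{F}_q)\leq \pi_i + (\text{something})$... Actually, re-reading the claimed inequality, the contribution is $k_2(\ell-1)$, i.e.\ $\ell-1$ per component, so what is needed is $\#D_i(\mathbb{F}_q)\leq \ell-1$ for each $D_i$ with $\pi_i\leq\ell$. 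Since the hypothesis is $\pi>\ell$ for every \emph{absolutely} irreducible curve on $A$, a component $D_i$ with $\pi_i\leq\ell$ cannot be absolutely irreducible: it is $\mathbb{F}_q$-irreducible but splits over $\overline{\mathbb{F}}_q$ into $e\geq 2$ Galois-conjugate absolutely irreducible components, and its $\mathbb{F}_q$-points all lie in the intersection of these conjugate components. I would then bound the number of such points: the conjugate components are distinct curves on $A$, two distinct absolutely irreducible curves $\Gamma$, $\Gamma'$ on a surface meet in at most $\Gamma.\Gamma'$ points, and an intersection-theoretic estimate via the Hodge index inequality~(\ref{Hodge}) together with the adjunction-type bound should give that this intersection number is at most $\ell-1$; more precisely, since the conjugates are numerically equal and each has self-intersection $2g-2$ where $g$ is their common arithmetic genus, and $p_a(D_i)=\pi_i$ relates to $e$ and $g$ through the conductor/Euler-characteristic formula, one deduces $\#D_i(\mathbb{F}_q)\le \pi_i-1\le\ell-1$.

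Putting the two pieces together gives
\[
N(f)\;\leq\;\sum_{i=1}^{k_1}\#D_i(\mathbb{F}_q)+\sum_{i=k_1+1}^{k}\#D_i(\mathbb{F}_q)\;\leq\; k_1(q+1-\Tr(A)-2m)+m\sum_{i=1}^{k_1}\pi_i+k_2(\ell-1),
\]
which is~(\ref{Nf_simple}), with the convention $\sum_{i=1}^{k_1}\pi_i=0$ when $k_1=0$.

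The main obstacle, I expect, is the precise argument for the bound $\#D_i(\mathbb{F}_q)\leq\ell-1$ on the non-absolutely-irreducible components: one must argue cleanly that the $\mathbb{F}_q$-points of an $\mathbb{F}_q$-irreducible, geometrically reducible curve of arithmetic genus $\pi_i$ are confined to the pairwise intersections of its Galois conjugates, and then bound the total number of such intersection points by $\pi_i-1$ using intersection theory on $A$ (the Hodge index inequality~(\ref{Hodge}), $D^2=2\pi-2$ for a curve of arithmetic genus $\pi$, and the comparison of $p_a$ of the whole curve with the arithmetic genera of the components). The book-keeping with the conductor when the conjugate components themselves are singular or meet tangentially is the delicate point; everything else is a direct application of Haloui's estimate and~(\ref{Nf_sum}).
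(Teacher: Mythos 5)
Your proof follows essentially the same route as the paper: split the components by genus, apply Haloui's point-count bound with $|\pi_i-2|=\pi_i-2$ to the components with $\pi_i>\ell$, observe that any component with $\pi_i\le\ell$ is necessarily not absolutely irreducible and bound its rational points by $\pi_i-1\le\ell-1$, then sum via (\ref{Nf_sum}). The one step you flag as delicate is exactly the point the paper settles by citing the proof of Theorem 4 of \cite{Haloui}, and your sketch closes cleanly without any Hodge index or conductor book-keeping: a rational point is fixed by the Frobenius, which permutes the $e\ge 2$ conjugate geometric components transitively, so such a point lies on two distinct components $C_1$, $C_2$ and hence $\#D_i(\mathbb{F}_q)\le C_1.C_2$, while adjunction on $A$ (where $K_A=0$) gives $\pi_i-1=\sum_j\bigl(p_a(C_j)-1\bigr)+\sum_{j<j'}C_j.C_{j'}\ge C_1.C_2$, because an abelian surface contains no rational curves (so $p_a(C_j)\ge 1$) and all the cross intersection numbers are non-negative.
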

 \begin{proof}
In order to prove the statement, let us recall that by Theorem 4 of \cite{Haloui} the number of rational points on an irreducible curve $D_i$ on $A$ of arithmetic genus $\pi_i$ satisfies $\#D_i(\mathbb{F}_q)\leq q+1-\Tr(A)+m|\pi_i -2|$. Since $A$ is simple and hence $\pi_i\geq 2$, we get $\#D_i(\mathbb{F}_q)\leq q+1-\Tr(A)-2m+m\pi_i$. 
Without loss of generality we consider $\{D_1,\ldots,D_{k_1}\}$  to be the set of the $D_i$ which have arithmetic genus $\pi_i > \ell$ and 
$\{D_{k_1+1},\ldots,D_{k}\}$ to be the set of the $k_2$ curves which have arithmetic genus $\pi_i \leq \ell$.
Thus, we get 
$$\sum_{i=1}^{k_1} \#D_i(\mathbb{F}_q)\leq k_1(q+1-\Tr(A)-2m)+m\sum_{i=1}^{k_1}\pi_i $$
where $\pi_i > \ell$.
Under the hypothesis that any absolutely irreducible curve on $A$ has arithmetic genus $>\ell$, we have that the $k_2$ curves that have arithmetic genus $\pi_i \leq \ell$ are necessarily non absolutely irreducible. It is well-known (see for example the proof of Theorem 4 of \cite{Haloui}) that if $D_i$ is a non absolutely irreducible curve of arithmetic genus $\pi_i$ lying on an abelian surface, its number of rational points satisfies $\#D_i(\mathbb{F}_q)\leq \pi_i-1$. Hence summing on $k_2$ we get
$$\sum_{i=k_1+1}^{k} \#D_i(\mathbb{F}_q)\leq\sum_{i=k_1+1}^{k} (\pi_i-1)\leq k_2(\ell-1).$$
The proof is now complete using inequality (\ref{Nf_sum}).
 \end{proof}

In order to use inequality (\ref{Nf_simple}) to deduce a lower bound on the minimum distance of the code $\mathcal{C}(A,rH)$, it is sufficient to bound the numbers $k_1$ and $k_2$ and the sum $\sum_{i=1}^{k_1} \pi_i$.

\begin{lemma}\label{SafiaGeneralise}
With the same notations and under the same hypotheses as Lemma \ref{Borne_N(f)} we have:
\begin{enumerate}[(1)]
\item\label{first} $k_1\sqrt{\ell}+k_2\leq r\sqrt{\frac{H^2}{2}},$
\item\label{third} 
$\sum^{k_1}_{i=1}\pi_i \leq\ \alpha^2+2\sqrt{\ell}\alpha+(\ell+1)k_1$, where
$\alpha := r\sqrt{\frac{H^2}{2}}-k_1\sqrt{\ell}-k_2$.
\end{enumerate}
\end{lemma}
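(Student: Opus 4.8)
The plan is to derive both inequalities from the Hodge index inequality (\ref{Hodge}) applied to $H$ and to the individual components $D_i$, exactly as in Lemma \ref{HodgeTGV}, but keeping track of the genus-threshold splitting $k=k_1+k_2$. For part (\ref{first}), I would start from the adjunction identity $D_i^2=2\pi_i-2$ and the Hodge inequality $D_i^2 H^2\le (D_i.H)^2$, which together give $D_i.H\ge\sqrt{2(\pi_i-1)H^2}$ for every $i$ (all quantities are positive since $H$ is ample and $\pi_i\ge 2$ by simplicity). For the $k_1$ components with $\pi_i>\ell$, i.e.\ $\pi_i\ge\ell+1$, this yields $D_i.H\ge\sqrt{2\ell H^2}=\sqrt{\ell}\cdot\sqrt{2H^2}$; for the $k_2$ components with $\pi_i\le\ell$ we only have $\pi_i\ge 2$ hence $D_i.H\ge\sqrt{2H^2}$. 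Summing over all components and using $\sum_i D_i.H\le\sum_i n_i D_i.H = rH.H = rH^2$ (valid since the $n_i$ are positive integers and $D\equiv rH$), we obtain
\begin{equation*}
(k_1\sqrt{\ell}+k_2)\sqrt{2H^2}\le rH^2,
\end{equation*}
and dividing by $\sqrt{2H^2}>0$ gives $k_1\sqrt{\ell}+k_2\le r\sqrt{H^2/2}$, which is (\ref{first}).

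For part (\ref{third}) I would again use (\ref{sum_pi}), namely $\pi_i-1\le (D_i.H)^2/(2H^2)$, and sum it over $i=1,\dots,k_1$ only, getting
\begin{equation*}
\sum_{i=1}^{k_1}\pi_i - k_1 \le \frac{1}{2H^2}\sum_{i=1}^{k_1}(D_i.H)^2.
\end{equation*}
To bound $\sum_{i=1}^{k_1}(D_i.H)^2$ I would set $x_i:=D_i.H$ for $i=1,\dots,k_1$; these are positive reals with $\sum_{i=1}^{k_1}x_i\le rH^2 - \sum_{i=k_1+1}^{k}n_i(D_i.H) \le rH^2 - k_2\sqrt{2H^2}$, where the last step uses $n_i\ge 1$ and $D_i.H\ge\sqrt{2H^2}$ for the low-genus components. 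Writing $S:=rH^2-k_2\sqrt{2H^2}$, so that $\sum x_i\le S$, and also noting $x_i\ge\sqrt{2\ell H^2}$ for each of the $k_1$ high-genus components, the quantity $\sum x_i^2$ is maximised, under a fixed upper bound $S$ on the sum and the lower bounds $x_i\ge\sqrt{2\ell H^2}$, by pushing all slack onto a single coordinate; this gives
\begin{equation*}
\sum_{i=1}^{k_1}x_i^2 \le \bigl(S-(k_1-1)\sqrt{2\ell H^2}\,\bigr)^2 + (k_1-1)\cdot 2\ell H^2.
\end{equation*}
Substituting $S-(k_1-1)\sqrt{2\ell H^2} = rH^2 - k_2\sqrt{2H^2} - (k_1-1)\sqrt{2\ell H^2} = \sqrt{2H^2}\bigl(r\sqrt{H^2/2} - k_2 - (k_1-1)\sqrt{\ell}\,\bigr) = \sqrt{2H^2}(\alpha+\sqrt{\ell})$ with $\alpha = r\sqrt{H^2/2}-k_1\sqrt{\ell}-k_2$, dividing by $2H^2$, and adding $k_1$ should collapse to $\sum_{i=1}^{k_1}\pi_i\le (\alpha+\sqrt{\ell})^2 + (k_1-1)\ell + k_1 = \alpha^2+2\sqrt{\ell}\,\alpha+(\ell+1)k_1$, which is (\ref{third}).

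I expect the main obstacle to be the convexity/extremal argument bounding $\sum x_i^2$: one must justify carefully that, over the polytope $\{x_i\ge c,\ \sum x_i\le S\}$, the convex function $\sum x_i^2$ attains its maximum at a vertex where all but one coordinate equal the lower bound $c$ and the remaining one equals $S-(k_1-1)c$ — and one must check that this vertex is actually feasible, i.e.\ that $S-(k_1-1)c\ge c$, equivalently $\alpha\ge 0$, which follows from part (\ref{first}) since $\alpha = r\sqrt{H^2/2}-k_1\sqrt{\ell}-k_2 \ge 0$. A minor secondary point is to handle the degenerate cases $k_1=0$ (where $\sum_{i=1}^{k_1}\pi_i=0$ by convention and the inequality is immediate) and $k_1=1$ (where the "$(k_1-1)$ slack coordinates" argument is vacuous). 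The rest is bookkeeping with the substitutions above.
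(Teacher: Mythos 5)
Your argument is correct. Part (\ref{first}) is essentially the paper's proof: both rest on adjunction plus the Hodge inequality (\ref{Hodge}) giving $\sqrt{2(\pi_i-1)H^2}\le D_i.H$, the lower bounds $\pi_i\ge 2$ (simplicity) for the low-genus components and $\pi_i\ge\ell+1$ for the high-genus ones, and $\sum_i n_iD_i.H=rH^2$; you merely phrase it as lower bounds on the $D_i.H$ rather than upper bounds on the $\sqrt{\pi_i-1}$. For part (\ref{third}) you take a genuinely different route in presentation: you maximise the convex function $\sum_{i\le k_1}(D_i.H)^2$ over the simplex $\{x_i\ge\sqrt{2\ell H^2},\ \sum x_i\le rH^2-k_2\sqrt{2H^2}\}$ and argue the maximum sits at the vertex with all slack on one coordinate, checking feasibility via $\alpha\ge0$ from part (\ref{first}); your bookkeeping (the substitution $S-(k_1-1)c=\sqrt{2H^2}(\alpha+\sqrt{\ell})$, division by $2H^2$, adding $k_1$) does land exactly on $\alpha^2+2\sqrt{\ell}\alpha+(\ell+1)k_1$. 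The paper instead substitutes $s_i=\sqrt{\pi_i-1}-\sqrt{\ell}\ge0$ and uses the one-line inequality $\sum s_i^2\le\left(\sum s_i\right)^2$ together with $\sum s_i\le\alpha$; after the change of variables $y_i=x_i-c$ your vertex computation reduces to exactly that inequality, so the two arguments are equivalent, with the paper's being lighter (no compactness/vertex discussion, and the degenerate cases $k_1=0,1$ need no separate treatment) and yours making the extremal structure of the bound more transparent.
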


\begin{proof}
Let us prove the first assertion. 
Since $H$ is ample, by Nakai-Moishezon criterion we have that 
$D_i.H>0$ for every $i=1,\dots,k$ and $H^2>0$. Thus,
we can
take the square root of inequality (\ref{sum_pi}) in the proof of Lemma \ref{HodgeTGV} and get $\sqrt{\pi_i-1} \leq D_i.H/ \sqrt{2H^2}$.
Now taking into account that $1\leq \pi_i-1$ since $A$ is assumed to be simple, summing for $i\in\{1,\dots ,k \}$, using that $n_i>0$ and that $\sum^k_{i=1} n_iD_i.H=rH^2$, we obtain
\begin{align*}
\sum^{k_1}_{i=1} \sqrt{\pi_i-1}
&=
\sum^{k}_{i=1} \sqrt{\pi_i-1}-\sum^{k}_{i=k_1+1} \sqrt{\pi_i-1} \\
&\leq  
\sum^{k}_{i=1} \sqrt{\pi_i-1}-k_2\\
&\leq  
\frac{1}{\sqrt{2H^2}} \sum^k_{i=1} n_i D_i.H-k_2\\
&=r\sqrt{\frac{H^2}{2}}-k_2.
\end{align*}
Considering the $k_1$ curves that have arithmetic genus $\pi_i > \ell$, we have $\sqrt{\ell} \leq \sqrt{\pi_i-1}$ and so $k_1\sqrt{\ell} \leq\sum^{k_1}_{i=1} \sqrt{\pi_i-1}$. Thus we get 
$$k_1\sqrt{\ell}+k_2\leq r\sqrt{\frac{H^2}{2}}.$$
Let us now prove the last statement. For $i=1,\dots,k_1$, set $s_i=\sqrt{\pi_i-1}-\sqrt{\ell}.$
Under the hypothesis that $\pi_i\geq \ell+1$, the $s_i$ are non-negative real numbers. Thus
$$\sum^{k_1}_{i=1} s_i^2\leq \left(\sum^{k_1}_{i=1} s_i\right)^2.$$
Moreover, we have seen above that
$$\sum^{k_1}_{i=1} s_i= \sum^{k_1}_{i=1} \sqrt{\pi_i-1}-k_1\sqrt{\ell}
 \leq r\sqrt{\frac{H^2}{2}}-k_1\sqrt{\ell}-k_2=\alpha.$$
Therefore, since $\pi_i=(s_i+\sqrt{\ell})^2+1=s_i^2+2s_i\sqrt{\ell}+\ell+1$ for $i\in\{1,\dots,k_1\}$, we have
\begin{equation} \label{FinLemme32}
\begin{split}
\sum^{k_1}_{i=1}\pi_i &=\sum_{i=1}^{k_1} s_i^2 +2\sqrt{\ell}\sum_{i=1}^{k_1} s_i +(\ell +1)k_1  \\
&\leq \left( \sum_{i=1}^{k_1} s_i\right)^2 +2\sqrt{\ell}\sum_{i=1}^{k_1} s_i +(\ell +1)k_1\\
&\leq \alpha^2+2\sqrt{\ell}\alpha+(\ell +1)k_1,
\end{split}
\end{equation}
which completes the proof of the lemma.
\end{proof}
We can now prove the following theorem.
\begin{theorem}\label{minimum_distance_simple_surfaces}
Let $A$ be a simple abelian surface defined over $\mathbb{F}_q$ of trace $\Tr(A)$.
Let $m=\lfloor 2\sqrt{q}\rfloor$, $H$ be an ample divisor on $A$
rational over $\mathbb{F}_q$ and $r$ be a positive integer large enough  so that $rH$ is very ample.
Moreover, let $\ell$ be a positive integer such that for every absolutely irreducible curves of arithmetic genus $\pi$ lying on $A$ we have $\pi>\ell$.
Then the minimum distance $d(A,rH)$ of the code $\mathcal{C}(A,rH)$ satisfies
$$
d(A,rH)\geq  \#A(\mathbb{F}_q)-\max\left( \left\lfloor r\sqrt{\frac{H^2}{2}}\right\rfloor(\ell-1), \varphi(1), \varphi\left(\left\lfloor r\sqrt{\frac{H^2}{2\ell}}\right\rfloor \right)\right),$$
where
\footnotesize{
$$\varphi(x):= m\left(r \sqrt{\frac{H^2}{2}} - x \sqrt{\ell}\right)^2+2m\sqrt{\ell}\left(r \sqrt{\frac{H^2}{2}} - x \sqrt{\ell}\right)  +x \Big{(} q+1-\Tr(A)+(\ell-1)(m-\sqrt{\ell})\Big{)} + r \sqrt{\frac{H^2}{2}}(\ell-1).$$}
\end{theorem}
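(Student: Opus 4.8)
The plan is to bound the number $N(f)$ of vanishing coordinates of $\ev(f)$ uniformly over all nonzero $f\in L(rH)$, and then to invoke~(\ref{minimumdistance}). Recall from~(\ref{injective}) that~(\ref{minimumdistance}) is valid as soon as the evaluation map is injective; if it is not, the uniform bound $M$ we obtain will satisfy $M\geq\#A(\mathbb{F}_q)$, so that $\#A(\mathbb{F}_q)-M\leq 0\leq d(A,rH)$ and the inequality is trivial. So fix a nonzero $f$, write $D=rH+(f)=\sum_{i=1}^{k}n_iD_i$ as in~(\ref{decomposition}), and split $k=k_1+k_2$ as in Lemma~\ref{Borne_N(f)}, with $k_1$ the number of components $D_i$ of arithmetic genus $>\ell$ and $k_2$ the number with arithmetic genus $\leq\ell$. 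Set $R:=r\sqrt{H^2/2}$ and $\alpha:=R-k_1\sqrt{\ell}-k_2$; by Lemma~\ref{SafiaGeneralise}(\ref{first}) we have $\alpha\geq 0$, and since $k_2\geq 0$ also $k_1\sqrt{\ell}\leq R$, which forces $k_1\leq\lfloor r\sqrt{H^2/(2\ell)}\rfloor$ and $k_2\leq\lfloor R\rfloor$.

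First I would dispose of the degenerate case $k_1=0$: by the convention in Lemma~\ref{Borne_N(f)}, inequality~(\ref{Nf_simple}) collapses to $N(f)\leq k_2(\ell-1)$, hence $N(f)\leq\lfloor R\rfloor(\ell-1)=\lfloor r\sqrt{H^2/2}\rfloor(\ell-1)$ since $\ell\geq 1$; this is exactly the first term of the announced maximum.

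Assume now $k_1\geq 1$. Plugging the estimate on $\sum_{i=1}^{k_1}\pi_i$ from Lemma~\ref{SafiaGeneralise}(\ref{third}) into~(\ref{Nf_simple}) and collecting the coefficient of $k_1$ gives
\[
N(f)\ \leq\ m\alpha^2+2m\sqrt{\ell}\,\alpha+k_1\big(q+1-\Tr(A)+m(\ell-1)\big)+k_2(\ell-1)\ =:\ F(k_1,k_2).
\]
The core of the argument is then the elementary identity
\[
\varphi(k_1)-F(k_1,k_2)\ =\ m\,k_2\big((R-k_1\sqrt{\ell})+\alpha+2\sqrt{\ell}\big)+(\ell-1)\,\alpha,
\]
which one checks by expanding both sides and using $(R-k_1\sqrt{\ell})-\alpha=k_2$ and $R-k_1\sqrt{\ell}-k_2=\alpha$. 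Since $m>0$, $k_2\geq 0$, $\alpha\geq 0$, $R-k_1\sqrt{\ell}=\alpha+k_2\geq 0$ and $\ell-1\geq 0$, the right-hand side is non-negative, so $N(f)\leq F(k_1,k_2)\leq\varphi(k_1)$. Finally $\varphi$ is a quadratic in $x$ with leading coefficient $m\ell>0$, hence convex, so on the interval $1\leq x\leq\lfloor r\sqrt{H^2/(2\ell)}\rfloor$ it attains its maximum at an endpoint, whence $\varphi(k_1)\leq\max\big(\varphi(1),\varphi(\lfloor r\sqrt{H^2/(2\ell)}\rfloor)\big)$.

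Combining the two cases, $N(f)\leq\max\big(\lfloor r\sqrt{H^2/2}\rfloor(\ell-1),\varphi(1),\varphi(\lfloor r\sqrt{H^2/(2\ell)}\rfloor)\big)$ for every nonzero $f\in L(rH)$, and the statement follows from~(\ref{minimumdistance}). I expect the only real obstacle to be the bookkeeping: merging Lemmas~\ref{Borne_N(f)} and~\ref{SafiaGeneralise} with the correct coefficient of $k_1$, and verifying the displayed identity for $\varphi(k_1)-F(k_1,k_2)$ together with the sign of each of its terms — where the decisive ingredient $\alpha\geq 0$ is precisely what Lemma~\ref{SafiaGeneralise}(\ref{first}) provides. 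It is worth noting that this route optimizes in $k_1$ and $k_2$ simultaneously via the convexity of $\varphi$, so no extra assumption linking $\ell$, $m$ and $\Tr(A)$ is required; in particular one need not reduce to the subcase $k_2=0$.
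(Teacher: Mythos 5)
Your proof is correct and follows essentially the same route as the paper: the same combination of Lemmas \ref{Borne_N(f)} and \ref{SafiaGeneralise}, the same split into the cases $k_1=0$ and $k_1\geq 1$, and the same convexity-at-the-endpoints conclusion for $\varphi$ on the segment $k_2=0$. The only differences are presentational: where the paper maximizes its auxiliary function $\phi(k_1,k_2)$ over the polygon ${\mathcal K}$ by observing it is increasing in $\alpha$ for fixed $k_1$, you instead check the explicit non-negative identity for $\varphi(k_1)-F(k_1,k_2)$ (which is accurate), and you additionally spell out the harmless case where the evaluation map fails to be injective.
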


\begin{proof}
Recall that $d(A,rH)= \#A(\mathbb{F}_q)-\max \{N(f), f\in L(rH)\setminus \{0\} \} .$
The point of departure is the inequality (\ref{Nf_simple}).
When $k_1=0$, it simply implies that $N(f)$ is 
 less than or equal to $k_2 (\ell-1)$.
If $k_1>0$  we use point (\ref{third}) of Lemma \ref{SafiaGeneralise} to get
\begin{equation}\label{inegalite_amelioree}
\footnotesize{
N(f) \leq 
k_1 \Big{(} q+1-\Tr(A)+(\ell-1)\left(m-\sqrt{\ell}\right)
\Big{)} + m\alpha^2+2m\sqrt{\ell}\alpha   + r \sqrt{\frac{H^2}{2}}(\ell-1)}
-\alpha (\ell - 1) 
\end{equation}
where we have set $\alpha:=r \sqrt{\frac{H^2}{2}} - k_1 \sqrt{\ell} - k_2$.
Now point (\ref{first}) of Lemma \ref{SafiaGeneralise} ensures that $\alpha \geq 0$
which enables to conclude that $N(f) \leq \phi(k_1,k_2)$ where $\phi(k_1,k_2)$
is defined by
\begin{equation*}
    \phi(k_1,k_2) = \begin{cases}
               k_2 (\ell-1)           & \text{if } k_1 = 0\\ \ \\
               
                k_1 \Big{(} q+1-\Tr(A)+(\ell-1)\left(m-\sqrt{\ell}\right)\Big{)}                   &  \\
               + m\alpha^2+2m\sqrt{\ell}\alpha   + r \sqrt{\frac{H^2}{2}}(\ell-1) & \text{if } k_1>0.

           \end{cases}
\end{equation*}
It thus remains to
maximise the function $\phi(k_1, k_2)$ on the integer points 
inside
 the polygon ${\mathcal K}$ defined by
$${\mathcal K}=\left\{(k_1,k_2)\mid 0\leq k_1, 0\leq k_2, 1\leq k_1+k_2, \sqrt{\ell} k_1+k_2 \leq r\sqrt{\frac{H^2}{2}}\right\}$$
and represented below.

\begin{center}
\definecolor{zzttqq}{rgb}{0.6,0.2,0.}
\begin{tikzpicture}[line cap=round,line join=round,=triangle 45,x=1.0cm,y=1.0cm]
\clip(-1.5,-1.2) rectangle (6,7);
\fill[color=zzttqq,fill=zzttqq,fill opacity=0.10000000149011612] (0.,6.06) -- (0.,1.) -- (1.,0.) -- (3.72,0.) -- cycle;
\draw [domain=-2.4:13.38] plot(\x,{(--1.-1.*\x)/1.});
\draw [domain=-2.4:13.38] plot(\x,{(--22.5432-6.06*\x)/3.72});
\draw (1.,0.)-- (3.72,0.);
\draw (3.72,0.)-- (0.,6.06);
\draw (0.,6.06)-- (0.,1.);
\draw (0.,1.)-- (1.,0.);
\draw (0.,6.06)-- (0.,1.);
\draw (0.,1.)-- (1.,0.);
\draw (1.,0.)-- (3.72,0.);
\draw (3.72,0.)-- (0.,6.06);
\draw (0.,-2.22) -- (0.,7.12);
\draw [domain=-2.4:13.38] plot(\x,{(-0.-0.*\x)/2.72});
\draw [fill=black] (0.,1.) circle (2.5pt);
\draw[color=black] (-0.46,1.01) node {$1$};
\draw [fill=black] (1.,0.) circle (2.5pt);
\draw[color=black] (0.88,-0.29) node {$1$};
\draw [fill=black] (3.72,0.) circle (2.5pt);
\draw[color=black] (1.2,2) node {${\mathcal K}$};
\draw[color=black] (3.3,-0.7) node {$r\sqrt{\frac{H^2}{2\ell}}$};
\draw [fill=black] (0.,6.06) circle (2.5pt);
\draw[color=black] (-0.8,6) node {$r\sqrt{\frac{H^2}{2}}$};
\draw [fill=black] (0.,0.) circle (1.5pt);
\draw[color=black] (-0.32,-0.21) node {$O$};
\end{tikzpicture}
\end{center}
\medskip
First, in the case where $k_1=0$ we notice that
 $k_2\leq \left\lfloor r\sqrt{\frac{H^2}{2}}\right\rfloor$,
which  implies $\phi(0,k_2) \leq  \left\lfloor r\sqrt{\frac{H^2}{2}}\right\rfloor(\ell-1)$.
Second, for a fixed positive value of $k_1$ less than or equal to $r\sqrt{\frac{H^2}{2}}$
we can consider
$\phi$
as a degree-2 polynomial in $\alpha \geq 0$, namely
$\phi(k_1, k_2)= m\alpha(\alpha+2\sqrt{\ell})+\hbox{constant}$.
This way, it is clear that the maximum of $\phi$   
is reached for the maximal value of $\alpha$,
that is for the minimal value of $k_2$ such that $(k_1, k_2) \in {\mathcal K}$.
Hence, for this second case we are reduced to maximise $\phi$ on 
the segment $\left[(1,0), \left(r\sqrt{\frac{H^2}{2\ell}},0\right)\right]$.
As easily checked, $\phi$ is a convex function on this segment,
so the maximum 
 is reached at an extremal integer point, $(1,0)$ or $\left(\left\lfloor r\sqrt{\frac{H^2}{2\ell}}\right\rfloor,0\right)$.
Finally, note that we have $\phi(x,0)=\varphi(x)$, and the theorem is proved.
\end{proof}

\begin{remark} Taking into account the term
$-\alpha(\ell-1)$ in the inequality (\ref{inegalite_amelioree})
one sometimes obtains a slightly better bound than the one of Theorem \ref{minimum_distance_simple_surfaces}
  but whose expression is even more complicated.
\end{remark}

\begin{remark}\label{comparing_with_Haloui}
The bound in Theorem \ref{minimum_distance_simple_surfaces} applies with $\ell=1$ on simple abelian surfaces since they do not contain absolutely irreducible curves of arithmetic genus $0$ nor $1$, as remarked at the beginning of this section. Note that for $\ell=1$ we have $\left\lfloor r\sqrt{\frac{H^2}{2}}\right\rfloor (\ell-1)=0$ and thus in this context we are reduced to consider the maximum between $\varphi(1)$ and $\varphi\left(\left\lfloor r\sqrt{\frac{H^2}{2}}\right\rfloor \right)$ in Theorem \ref{minimum_distance_simple_surfaces}. In order to easily compare these two values, let us consider a weaker version of our theorem by removing the integer part. Indeed, $\varphi\left(\left\lfloor r\sqrt{\frac{H^2}{2}}\right\rfloor\right)\leq\varphi\left(r\sqrt{\frac{H^2}{2}}\right)$. Consequently we have $d(A,rH)\geq  \#A(\mathbb{F}_q)-\max\left(\varphi(1),\varphi\left(r\sqrt{\frac{H^2}{2}}\right)\right)$ and after some calculations we get
\begin{equation*}
d(A,rH)\geq 
\begin{cases} \#A(\mathbb{F}_q)-r\sqrt{\frac{H^2}{2}}\left(q+1-\Tr(A)\right) \text{ if } r\leq \frac{\sqrt{2}(q+1-\Tr(A)-m)}{m\sqrt{H^2}},\\
 \#A(\mathbb{F}_q)-(q+1-\Tr(A)-m)-mr^2\frac{H^2}{2} \text{ otherwise}.
\end{cases}
\end{equation*}
In particular if $A=\Jac(C)$ is the Jacobian of a curve $C$ of genus $2$ which is simple, then by setting $H=C$ with $H^2=C^2=2\pi_C-2=2$ by the adjunction formula, we obtain 
\begin{equation*}
d(\Jac(C),rC)\geq 
\begin{cases} \#\Jac(C)(\mathbb{F}_q)-r\# C(\mathbb{F}_q) \text{ if } r\leq \frac{q+1-\Tr(A)-2m}{m},\\
 \#\Jac(C)(\mathbb{F}_q)-\# C(\mathbb{F}_q)-m(r^2-1)
 \text{ otherwise}.
\end{cases}
\end{equation*}

This bound coincides with the bound in the main theorem of \cite{Haloui}.
\end{remark}

\begin{remark}\label{mieux_ell}
We point out, using an elementary asymptotic analysis for large $q$ and $r$, that our estimation of the minimum distance is better for larger $\ell$.
We assume 
that $\ell$ is small (for example $\ell$ is a fixed value)
and that  $r=q^{\rho}$ for some $\rho>0$. 
For simplicity, we also assume that $H^2=2$ (see Section~\ref{explicit}) and remove the integer part.
Taking into account that $\vert \Tr(A)\vert \leq 4\sqrt{q}$ yields to
\begin{equation*}
\left\{
\begin{matrix}
r(\ell -1)\sqrt{\frac{H^2}{2}} &\underset{q\to\infty}{\sim} &(\ell-1)  q^{\rho},\\
\varphi(1) &\underset{q\to\infty}{\sim} & c q^{\max\{1, 2\rho+\frac{1}{2}\}},\\
\varphi\left(r\sqrt{\frac{H^2}{2\ell}}\right) &\underset{q\to\infty}{\sim} &\frac{1}{\sqrt{\ell}} q^{1+\rho},
\end{matrix}
\right.
\end{equation*}
where $c=1,3$ or $2$ depending on whether $\rho<1/4, \rho=1/4$ or $\rho >1/4$.
Consequently, in this setting, the lower bound $d^*$ obtained in Theorem \ref{minimum_distance_simple_surfaces} satisfies
\begin{equation*}
\begin{cases} \#A(\mathbb{F}_q)-d^* \underset{q\to\infty}{\sim} 2q^{2\rho+\frac{1}{2}} \text{ if } \rho\geq\frac{1}{2},\\
 \#A(\mathbb{F}_q)-d^* \underset{q\to\infty}{\sim} \frac{1}{\sqrt{\ell}} q^{1+\rho} \text{ if } 0<\rho<\frac{1}{2}.
 \end{cases}
\end{equation*}
So for $q$ sufficiently large and $r=q^{\rho}$ with $0<\rho<\frac{1}{2}$, the bound in Theorem~\ref{minimum_distance_simple_surfaces} obtained for $\ell=2$ 
for instance is better than the one obtained for $\ell=1$,
 that is for any simple abelian variety. We thus focus in the next section on the existence of simple abelian surfaces which do not contain absolutely irreducible curves of arithmetic genus $2$.
 \end{remark}


\section{Abelian surfaces without curves of genus $1$ nor $2$}\label{Weil_restriction_of_elliptic_curves}
In light of Remark \ref{mieux_ell}, 
 considering abelian surfaces 
 without
  absolutely irreducible curves of small arithmetic genus will lead to a sharper lower bound on the minimum distance of the evaluation code $\mathcal{C}(A,rH)$.
Hence in this section we look for abelian surfaces which satisfy the property not to contain 
 absolutely irreducible curves defined over $\mathbb{F}_{q}$ of arithmetic genus $0$, $1$ nor $2$.
 
By the theorem of classification of Weil  (see for instance \cite[Th.1.3]{Howe_Nart_Ritzenthaler}),
a principally polarized abelian surface defined over $\mathbb{F}_q$ is isomorphic to either the polarized Jacobian of a curve of genus 2 over $\mathbb{F}_q$,
either the product of two polarized elliptic curves over $\mathbb{F}_q$ or either the Weil restriction from $\mathbb{F}_{q^2}$ to $\mathbb{F}_q$ of 
a polarized elliptic curve defined over $\mathbb{F}_{q^2}$.
It is straightforward to see that  the Jacobian of a curve of genus $2$ contains the curve itself and that the product of two elliptic curves contains copies of each of them. 

It therefore remains two cases to consider.
First, there is the case of 
abelian surfaces which do not admit a principal polarization.
We prove in Proposition \ref{nppas} that they always satisfy the desired property.
Secondly, we give
in Proposition \ref{without_curves}
necessary and sufficient conditions 
for Weil restrictions of elliptic curves
to satisfy the same property.

Throughout this section we will make use of the two following well-known results. An abelian surface contains a {\sl smooth} absolutely irreducible curve of genus $1$ if and only if it
is isogenous to the product of two elliptic curves.
Moreover, a simple abelian surface contains a {\sl smooth} absolutely irreducible curve of genus $2$ if and only if it is isogenous to the Jacobian of a curve of genus $2$ (see \cite[Proposition 2]{Galbraith_Smart}). 
The following lemma gives necessarily and sufficient conditions to avoid the presence of 
 {\sl non necessarily smooth} absolutely irreducible curves of low arithmetic genus.
\begin{lemma} \label{petit_pi_petit_g} 
Let $A$ be an abelian surface. Then the three following statements are equivalent:
\begin{enumerate}[(1)]
\item\label{uno} $A$ is simple and not isogenous to a Jacobian surface;
\item\label{due} $A$ does not contain absolutely irreducible curves of arithmetic genus $0$, $1$ nor $2$;
\item\label{tre} $A$ does not contain absolutely irreducible smooth curves of genus $0$, $1$ nor $2$.
\end{enumerate}
\end{lemma}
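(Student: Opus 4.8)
The plan is to prove the cycle of implications $(\ref{uno})\Rightarrow(\ref{due})\Rightarrow(\ref{tre})\Rightarrow(\ref{uno})$. The implication $(\ref{due})\Rightarrow(\ref{tre})$ is immediate, since smooth curves are in particular curves, so the real content is in the other two arrows. For $(\ref{tre})\Rightarrow(\ref{uno})$ I would argue contrapositively. If $A$ is not simple, then up to isogeny it splits as a product of two elliptic curves (an abelian surface that is not simple contains an elliptic curve, hence is isogenous to a product), and by the quoted result of Galbraith--Smart (\cite[Proposition 2]{Galbraith_Smart}) it then contains a smooth genus $1$ curve; if $A$ is simple but isogenous to a Jacobian surface, then again by \cite[Proposition 2]{Galbraith_Smart} it contains a smooth genus $2$ curve. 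In either case $(\ref{tre})$ fails. (One should be slightly careful to recall that a Jacobian of a genus $2$ curve is principally polarized and the statement "isogenous to a Jacobian" for a simple surface really does force a genuine smooth genus $2$ curve on $A$ itself, which is exactly what the cited proposition provides.)

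The main work is the implication $(\ref{uno})\Rightarrow(\ref{due})$, and here the point is to upgrade from \emph{smooth} curves to \emph{arbitrary} (possibly singular) absolutely irreducible curves of low arithmetic genus. Suppose $A$ is simple and not isogenous to a Jacobian surface, and suppose for contradiction that $A$ contains an absolutely irreducible curve $C$ with arithmetic genus $p_a(C)\in\{0,1,2\}$. Since $A$ is simple it contains no curve of arithmetic genus $0$ nor $1$ (this is recalled in the paper via Proposition 5 of \cite{Haloui}, using the adjunction-type identity $C^2 = 2p_a(C)-2$ on an abelian surface: a curve of arithmetic genus $0$ or $1$ would have $C^2\leq 0$, forcing $C$ to generate a proper abelian subvariety, contradicting simplicity). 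So we may assume $p_a(C)=2$. Let $\nu:\widetilde C\to C\hookrightarrow A$ be the normalization; the geometric genus $g(\widetilde C)$ satisfies $g(\widetilde C)\leq p_a(C)=2$. Because $A$ is simple, the image of $\widetilde C$ cannot generate a proper abelian subvariety, so in fact $g(\widetilde C)\geq 2$, whence $g(\widetilde C)=2$ and $C$ is already smooth. Then $(\ref{tre})$-type input applies: a simple abelian surface carrying a smooth genus $2$ curve is isogenous to the Jacobian of that curve, contradicting the hypothesis. This closes the loop.

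The step I expect to be the main obstacle is making rigorous the claim that, on a \emph{simple} abelian surface, an absolutely irreducible curve $C$ cannot have geometric genus $0$ or $1$ — equivalently that $p_a(C)=2$ forces smoothness. The clean way is via the Abel--Jacobi / universal property: a non-constant map from a smooth curve of genus $g\leq 1$ into an abelian variety factors through an abelian subvariety of dimension $\leq g\leq 1$; for $g=0$ the map is constant (impossible for a curve), and for $g=1$ the image is an elliptic curve inside $A$, contradicting simplicity. Hence $g(\widetilde C)\geq 2$, and combined with $g(\widetilde C)\leq p_a(C)=2$ we get equality and smoothness of $C$ (arithmetic genus equals geometric genus for an absolutely irreducible curve exactly when it is smooth). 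Once this bridge between arithmetic and geometric genus is in place, the rest is just assembling the quoted facts, so I would state this factorization argument carefully and keep everything else brief.
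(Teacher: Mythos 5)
Your proof is correct and follows essentially the same route as the paper: the same cycle $(1)\Rightarrow(2)\Rightarrow(3)\Rightarrow(1)$, excluding arithmetic genus $0$ and $1$ on a simple surface via the result recalled from Haloui, passing to the normalization to show a genus-$2$ curve must be smooth, and invoking Proposition 2 of Galbraith--Smart to reach the Jacobian contradiction. Your justification that the geometric genus of the normalization must equal $2$ (via the factorization of maps from low-genus curves through abelian subvarieties) spells out a step the paper states more tersely, but the argument is the same.
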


\begin{proof}
Let us prove that (\ref{uno}) $\Rightarrow$ (\ref{due}).
Let A be a simple abelian surface which is not isogenous to the Jacobian of a curve of genus $2$. Let $C$ be an absolutely irreducible curve lying on $A$  and let $\nu : \tilde{C} \mapsto C$ be its normalisation map. The case of genus $0$ and $1$ is treated in \cite[\S2]{Haloui}. 
For the genus $2$ case, assume by contradiction that $\pi(C)=2$. We get $g(\tilde C)=\pi(C)=2$ so $\tilde{C}=C$ is smooth
 and thus by Proposition 2 of \cite{Galbraith_Smart} $A$ is isogenous to the Jacobian of $C$, in contradiction with the hypotheses.
 
The implication (\ref{due}) $\Rightarrow$ (\ref{tre}) is trivial since for smooth curves the geometric and arithmetic genus coincide.
 
Finally let us prove that (\ref{tre}) $\Rightarrow$ (\ref{uno}). Assume by contradiction that $A$ is not simple, hence $A$ is isogenous to the product of two elliptic curves and thus it contains at least a smooth absolutely irreducible curve of genus $1$, in contradiction with (\ref{tre}). Now assume that $A$ is simple and isogenous to a Jacobian surface. Then by Proposition 2 of \cite{Galbraith_Smart}, $A$ contains a smooth absolutely irreducible curve of genus $2$, again in contradiction with (\ref{tre}). This concludes the proof.
 
\end{proof}

\subsection{Non-principally polarized abelian surfaces}\label{NonPrincPo}
An isogeny class of abelian varieties over ${\mathbb F}_q$ is said to be not principally polarizable if it does not contain a principally polarizable abelian variety over ${\mathbb F}_q$.
The following proposition states that abelian surfaces 
which do not admit a principal polarization
have naturally the property we are searching for.
\begin{proposition}\label{nppas}
Let $A$ be an abelian surface in a not principally polarizable isogeny class. Then $A$ does not contain absolutely irreducible curves of arithmetic genus $0$, $1$ nor $2$.
\end{proposition}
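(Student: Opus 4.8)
The plan is to use Lemma~\ref{petit_pi_petit_g}, which reduces the statement to showing that a non-principally-polarizable abelian surface $A$ over $\mathbb{F}_q$ is simple and not isogenous to a Jacobian surface. Indeed, by the equivalence $(\ref{uno}) \Leftrightarrow (\ref{due})$ in that lemma, once we know $A$ is simple and lies outside the isogeny class of any Jacobian of a genus-$2$ curve, we immediately conclude that $A$ contains no absolutely irreducible curve of arithmetic genus $0$, $1$ nor $2$.

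First I would argue that $A$ is simple. If $A$ were not simple, it would be isogenous to a product $E_1 \times E_2$ of elliptic curves (up to isogeny every non-simple abelian surface splits this way). But a product of two elliptic curves carries the product principal polarization, so its isogeny class contains a principally polarized abelian variety, contradicting the hypothesis that the isogeny class of $A$ is not principally polarizable. Hence $A$ is simple.

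Next I would argue that $A$ is not isogenous to a Jacobian surface. Suppose for contradiction that $A$ is isogenous to $\Jac(C)$ for some smooth projective genus-$2$ curve $C$ over $\mathbb{F}_q$. Then $\Jac(C)$, carrying its canonical principal polarization, is a principally polarized abelian variety in the isogeny class of $A$, again contradicting the hypothesis. Therefore $A$ is simple and not isogenous to a Jacobian surface, i.e.\ condition~(\ref{uno}) of Lemma~\ref{petit_pi_petit_g} holds, and the conclusion follows from the implication $(\ref{uno}) \Rightarrow (\ref{due})$ of that lemma.

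The content of the proof is thus entirely a matter of unwinding definitions together with the structural facts recalled just above the statement: the only genuinely non-trivial input is Lemma~\ref{petit_pi_petit_g} itself (and, through it, Proposition~2 of~\cite{Galbraith_Smart} and the genus $0$, $1$ analysis of~\cite[\S2]{Haloui}). I do not anticipate a real obstacle; the one point requiring a little care is making sure the two reductions are stated at the level of the isogeny class rather than the individual surface, since the hypothesis is on the isogeny class, and that both the product polarization on $E_1 \times E_2$ and the theta polarization on $\Jac(C)$ are genuinely principal.
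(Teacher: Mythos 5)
Your proof is correct and follows essentially the same route as the paper: both reduce to condition (\ref{uno}) of Lemma~\ref{petit_pi_petit_g} by observing that a product of elliptic curves and a Jacobian surface each carry a principal polarization, so neither can lie in the (non-principally-polarizable) isogeny class of $A$. Your write-up simply makes explicit the two case checks that the paper states in one sentence.
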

\begin{proof}
It is well-known that an abelian variety contains no curves of genus $0$. 
Since $A$ is not isogenous to a principally polarizable abelian surface, it follows that it is not isogenous to a product of two  elliptic curves nor to a Jacobian surface.
By Lemma \ref{petit_pi_petit_g} we conclude the proof.
\end{proof}

To be concrete, let us recall here a characterisation 
of non-principally polarized isogeny class of abelian surfaces (\cite[Th.1]{Howe_Maisner_Nart_Ritzenthaler}) 
for which Theorem \ref{minimum_distance_simple_surfaces} applies with $\ell=2$.
An isogeny class of abelian surfaces defined over $\mathbb{F}_q$ with Weil polynomial $f(t)=t^4+at^3+bt^2+qat+q^2$ is not principally polarizable if and only if the following three conditions are satisfied:
\begin{enumerate}[(1)]
\item $a^2-b=q$;
\item $b<0$;
\item all prime divisors of $b$ are congruent to $1\mod 3$. 
\end{enumerate}

\subsection{Weil restrictions of elliptic curves}\label{WeilRest}
Let $k=\mathbb{F}_q$ and $K$ denotes an extension of finite degree $[K:k]$ of $k$. Let $E$ be an elliptic curve defined over $K$. The $K/k$-Weil restriction of scalars of $E$ is an abelian variety $W_{K/k}(E)$ of dimension $[K:k]$ defined over $k$ (see \cite[\S16]{Milne_AG} for a presentation in terms of universal property and see \cite[\S3]{Galbraith_Smart} for a constructive approach). We consider here the $\mathbb{F}_{q^2}/\mathbb{F}_q$-Weil restriction of an elliptic curve $E$ defined over $\mathbb{F}_{q^2}$ which is an abelian surface $A$ defined over $\mathbb{F}_q$.

Let $f_{E/\mathbb{F}_{q^2}}(t)$ be the Weil polynomial of the elliptic curve $E$ defined over $\mathbb{F}_{q^2}$. Then the Weil polynomial of $A$ over $\mathbb{F}_q$  is given (see  \cite[Prop 3.1]{Galbraith}) by
\begin{equation}\label{lien_f}
f_{A/\mathbb{F}_{q}}(t)=f_{E/\mathbb{F}_{q^2}}(t^2).
\end{equation}
Since $f_{E/\mathbb{F}_{q^2}}(t)=t^2-\Tr(E/\mathbb{F}_{q^2})t+q^2$ we have $f_A(t)=t^4-\Tr(E/\mathbb{F}_{q^2})t^2+q^2$, thus it follows from (\ref{weil_polynomial}) that the trace of $A$ over $\mathbb{F}_q$ is equal to $0$.
Moreover, since the number of $\mathbb{F}_q$-rational points on an abelian variety $A$ defined over $\mathbb{F}_q$ equals $f_{A/\mathbb{F}_{q}}(1)$, we get that the number of rational points on $A=W_{\mathbb{F}_{q^2}/\mathbb{F}_q}(E)$ over $\mathbb{F}_q$ is the same as the number of rational points on $E$ over $\mathbb{F}_{q^2}$, i.e.\ we have $\# A (\mathbb{F}_q) =f_{A/\mathbb{F}_{q}}(1)=f_{E/\mathbb{F}_{q^2}}(1)= \# E(\mathbb{F}_{q^2}).$

\medskip

\begin{proposition}\label{without_curves}
Let $q$ be a power of a prime $p$. Let $E$ be an elliptic curve defined over $\mathbb{F}_{q^2}$ of Weil polynomial $f_{E/\mathbb{F}_{q^2}}(t)=t^2-\Tr(E/\mathbb{F}_{q^2})t+q^2$. Let $A$ be the $\mathbb{F}_{q^2}/\mathbb{F}_{q}$-Weil restriction of the elliptic curve $E$. Then $A$ does not contain absolutely irreducible curves defined over $\mathbb{F}_{q}$ of arithmetic genus $0$, $1$ nor $2$ if and only if one of the following conditions holds
\begin{enumerate}[(1)]
\item\label{one} $\Tr(E/\mathbb{F}_{q^2})=2q-1$;
\item\label{two} $p>2$ and $\Tr(E/\mathbb{F}_{q^2})=2q-2$;
\item $p \equiv 11 \mod 12$ or $p=3$, $q$ is a square and $\Tr(E/\mathbb{F}_{q^2})=q$;
\item $p=2$, $q$ is nonsquare and $\Tr(E/\mathbb{F}_{q^2})=q$;
\item\label{five} $q=2$ or $q=3$ and $\Tr(E/\mathbb{F}_{q^2})=2q$.
\end{enumerate}
\end{proposition}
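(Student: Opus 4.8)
The reduction is given by Lemma~\ref{petit_pi_petit_g}: saying that $A$ contains no absolutely irreducible curve defined over $\mathbb{F}_q$ of arithmetic genus $0$, $1$ nor $2$ amounts to saying that $A$ is simple and not isogenous to a Jacobian surface. Moreover, by Weil's classification recalled in the introduction, the $\mathbb{F}_{q^2}/\mathbb{F}_q$-Weil restriction of an elliptic curve lies in a principally polarizable isogeny class, so the Howe--Nart--Ritzenthaler classification~\cite{Howe_Nart_Ritzenthaler} of isogeny classes of abelian surfaces that contain the Jacobian of a genus-$2$ curve will be available. Put $\tau:=\Tr(E/\mathbb{F}_{q^2})$, so $|\tau|\le 2q$; by~(\ref{lien_f}) we have $f_A(t)=f_{E/\mathbb{F}_{q^2}}(t^2)=t^4-\tau t^2+q^2$, hence $\Tr(A)=0$ and the coefficient of $t^2$ in $f_A$ is $-\tau$. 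Thus everything is governed by the single integer $\tau$, and the plan is to first decide for which $\tau$ the surface $A$ is simple, and then, among those, for which $\tau$ the isogeny class of $A$ contains a Jacobian.

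For the first point I would analyse the factorisation of the even quartic $f_A$ over $\mathbb{Q}$. Any factorisation of $f_A$ into two monic quadratics must be of the form $(t^2-st+q)(t^2+st+q)=(t^2+q)^2-s^2t^2$ with $s^2=2q+\tau$, or, possible only when $\tau=\pm 2q$, of the form $(t^2+c)(t^2+c')$. This produces a trichotomy. If $2q+\tau$ is not a perfect square, then $f_A$ is irreducible, except when $\tau=2q$ (which then forces $q$ non-square and $f_A=(t^2-q)^2$, a power of an irreducible); in both cases $A$ is simple. If $2q+\tau$ is a positive perfect square $s^2$, then $f_A$ factors as a product of two coprime monic quadratics $t^2\mp st+q$, so by Tate's theorem $A$ is isogenous to a product of two elliptic curves (elliptic, for dimension reasons) and is not simple. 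Finally, if $\tau=-2q$, then $f_A=(t^2+q)^2$, and Waterhouse's determination of the traces of elliptic curves over $\mathbb{F}_q$ shows that $A$ is simple precisely when there is no elliptic curve of trace $0$ over $\mathbb{F}_q$ (i.e.\ when $q$ is a square and $p\equiv 1\pmod 4$), and is isogenous to the square of a supersingular elliptic curve otherwise. Specialising to the candidate traces, $2q+\tau$ equals $4q-1$, $4q-2$, $3q$, $4q$ for $\tau=2q-1,\ 2q-2,\ q,\ 2q$ respectively; the first two are never perfect squares (they are $\equiv 3$ and $\equiv 2$ modulo $4$), so $A$ is simple for $\tau=2q-1$ and for $\tau=2q-2$ without further hypotheses, simple for $\tau=q$ unless $p=3$ and $q$ is a non-square, and simple for $\tau=2q$ exactly when $q$ is a non-square.

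It remains to decide, for each $\tau$ giving a simple $A$, whether the isogeny class of $A$ contains a Jacobian, and this is where the Howe--Nart--Ritzenthaler criterion~\cite{Howe_Nart_Ritzenthaler} is invoked on the biquadratic Weil polynomial $t^4-\tau t^2+q^2$. Confronting its output with the simplicity analysis above, the simple Weil restrictions whose isogeny class contains no Jacobian are exactly the five families listed: $\tau=2q-1$ unconditionally; $\tau=2q-2$ when $p>2$; $\tau=q$ under the stated congruence and squareness conditions on $p$ and $q$; and the simple supersingular class $\tau=2q$, which contains no Jacobian precisely for $q\in\{2,3\}$. For every other admissible $\tau$, either $A$ is not simple (because $2q+\tau$ is a perfect square, or $\tau=-2q$) or the criterion of~\cite{Howe_Nart_Ritzenthaler} shows that the isogeny class of $A$ does contain a Jacobian. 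I expect the heart of the difficulty to lie in this last step: extracting from~\cite{Howe_Nart_Ritzenthaler} the exact Jacobian / non-Jacobian dichotomy for these particular biquadratic polynomials and carrying out the attendant arithmetic bookkeeping --- the congruence $p\equiv 11\pmod{12}$, the square / non-square alternatives, and the sporadic small fields $q=2,3$ --- while the borderline supersingular cases $\tau=\pm 2q$ in the simplicity step also demand some care with Waterhouse's theorem.
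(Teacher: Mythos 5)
Your reduction via Lemma~\ref{petit_pi_petit_g} and your hands-on simplicity analysis of $f_A(t)=t^4-\tau t^2+q^2$ (factorisation over $\mathbb{Q}$, Tate, Waterhouse) are sound, and in fact more self-contained than the paper, which reads simplicity directly off Table~1.2 of \cite{Howe_Nart_Ritzenthaler}. But the decisive step is exactly the one you leave undone: you assert that ``confronting the output of the Howe--Nart--Ritzenthaler criterion with the simplicity analysis'' yields precisely the five listed families, and you yourself flag this confrontation as ``the heart of the difficulty''. That confrontation \emph{is} the content of the proposition, and it hides a point your sketch never addresses: the list in \cite{Howe_Nart_Ritzenthaler} of simple isogeny classes containing no Jacobian has, besides the rows which for $(a,b)=(0,-\tau)$ become conditions (1)--(5), one further row consisting of the non-principally polarizable classes, which for $a=0$ reads $\tau=q$ with $p\equiv 1\pmod 3$ (Theorem~1.4 of \cite{Howe_Nart_Ritzenthaler}). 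For the ``only if'' direction you must explain why a Weil restriction can never land in that row; the paper does so by noting that $A=W_{\mathbb{F}_{q^2}/\mathbb{F}_q}(E)$ admits a principal polarization, so its isogeny class is principally polarizable (alternatively, Deuring--Waterhouse shows no elliptic curve over $\mathbb{F}_{q^2}$ has trace $q$ when $p\equiv 1\pmod 3$, so the case simply does not arise). Without this exclusion, ``no absolutely irreducible curves of arithmetic genus $\le 2$ $\Rightarrow$ one of (1)--(5)'' is not established.

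Two smaller slips in the part you did write out. For $\tau=2q$ with $q$ nonsquare, the fact that $f_A=(t^2-q)^2$ is a power of an irreducible does not by itself give simplicity; you need the Honda--Tate input that no elliptic curve over $\mathbb{F}_q$ has characteristic polynomial $t^2-q$ (its constant term is $-q$, not $q$), i.e.\ the same kind of argument you correctly give for $\tau=-2q$. And in your closing summary you file $\tau=-2q$ under ``not simple'', whereas your own analysis shows it is simple when $q$ is a square and $p\equiv 1\pmod 4$; in that case one must check (it is true, since $(a,b)=(0,2q)$ occurs nowhere in Table~1.2) that this class does contain a Jacobian, otherwise the proposition's list would be incomplete. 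Finally, be aware that the paper's proof gets both simplicity and the non-Jacobian property in one stroke from the table, so your factorisation analysis, while correct, duplicates work the citation already performs; the genuinely missing piece is the careful row-by-row matching and the principal-polarizability argument above.
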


\begin{proof}
 Let $E$ be an elliptic curve defined over $\mathbb{F}_{q^2}$ and let $A$ be the $\mathbb{F}_{q^2}/\mathbb{F}_{q}$-Weil restriction of $E$.
Let $f_A(t)=t^4+at^3+bt^2+qat+q^2$ be the Weil polynomial of $A$. Recall that we have $f_A(t)=t^4-\Tr(E/\mathbb{F}_{q^2})t^2+q^2$ by~(\ref{lien_f}) and thus $(a, b)=(0, -\Tr(E/\mathbb{F}_{q^2}))$. Theorem 1.2-(2) with Table 1.2 in \cite{Howe_Nart_Ritzenthaler} gives necessary and sufficient conditions on the couple $(a,b)$ for a simple abelian surface with the corresponding Weil polynomial not to be  isogenous to the Jacobian of a smooth curve of genus $2$.

Let us suppose that the trace of the elliptic curve $E$ over $\mathbb{F}_{q^2}$ does not  fit one of the conditions $(\ref{one}) - (\ref{five})$. 
Let us remark that by Theorem 1.4  in  \cite{Howe_Nart_Ritzenthaler} the first case of Table 1.2 in \cite[Theorem 1.2-(2)]{Howe_Nart_Ritzenthaler} corresponds to all simple abelian surfaces which do not admit a principal polarization.
Moreover the cases $(\ref{one}) - (\ref{five})$ cover the remaining cases of Table 1.2.
Then  $f_A(t)$ does not represent an isogeny class of simple principally polarizable abelian surfaces not containing a Jacobian surface. 
Hence $A$ is either not principally polarizable, or not simple or isogenous to the Jacobian of a curve of genus $2$. In the first case $A$ would not be a Weil restriction of an elliptic curve since these last one admit a principal polarization. In the second case, $A$ would contain a curve of genus $1$  and finally in the third case it would contain a curve of genus $2$. 
Thus we proved that if $A$ does not contain absolutely irreducible curves defined over $\mathbb{F}_{q}$ of arithmetic genus $0$, $1$ nor $2$ then one of conditions $(\ref{one}) - (\ref{five})$ holds.

Conversely, using again Table 1.2 in \cite[Theorem 1.2-(2)]{Howe_Nart_Ritzenthaler} we get that in each case from $(\ref{one})$ to $(\ref{five})$ of our proposition, the couple $(0, -\Tr(E/\mathbb{F}_{q^2}))$ corresponds to simple abelian surfaces not isogenous to the Jacobian of a curve of genus $2$. Therefore in these cases $A$ does not contain absolutely irreducible {\sl smooth} curves of geometric genus $0$, $1$ nor $2$, and thus by Lemma \ref{petit_pi_petit_g}, $A$ does not contain absolutely irreducible curves of arithmetic genus $0$, $1$ nor $2$.
\end{proof}

\begin{remark}
Let us mention two cases in which Weil restrictions of elliptic curves do contain curves of genus $1$ or $2$.
First, if the elliptic curve $E$ is defined over $\mathbb{F}_q$, it is clearly  a subvariety of $A$.
Note that in Proposition \ref{without_curves} we do not need to suppose that the elliptic curve $E$ defined over $\mathbb{F}_{q^2}$ is not defined over $\mathbb{F}_q$ because 
none of the the elliptic curves with trace over $\mathbb{F}_{q^2}$ as in cases (\ref{one})-(\ref{five}) is defined over $\mathbb{F}_q$.
Secondly, it is well-known that there are Weil restrictions of elliptic curves 
that are isogenous to Jacobian surfaces (see for example \cite{Scholten}) 
which thus contain smooth curves of genus $2$. 
\end{remark}

\begin{remark}
Let $q^2=p^{2n}$ with $p$ prime. By Deuring theorem (see for instance \cite[Th. 4.1]{Waterhouse}) for every integer $\beta$ satisfying $|\beta|\leq 2q$ such that $\gcd(\beta, p)=1$, or $\beta=\pm 2q$, or $\beta=\pm q$ and $p\not\equiv 1 \mod 3$, there exists an elliptic curve of trace $\beta$ over $\mathbb{F}_{q^2}$. Using Deuring theorem it is easy to check the existence of an elliptic curve with the given trace for each of the five cases in the previous theorem. 
\end{remark}

\begin{remark}
Let us remark that the first bound in Theorem \ref{minimum_distance_simple_surfaces} becomes relevant for $q\geq B$ with $B\approx 4(\sqrt{H^2}+1)^2$ and it is non-relevant for small $q$. Therefore case (\ref{five}) of Proposition \ref{without_curves} does not give rise to practical cases.
\end{remark}

Let us briefly outline the results obtained in the last sections.  The surfaces arising in Propositions  \ref{nppas} and  \ref{without_curves} give rise to codes for which the lower bound on the minimum distance of Theorem \ref{minimum_distance_simple_surfaces} applies with $\ell=2$.
This is exactly the purpose of the proposition stated in the introduction.

We have exploited the fact that,
for $q$ sufficiently large and $r=q^{\rho}$ with $0<\rho<\frac{1}{2}$,
the bound obtained for $\ell=2$ improves the one obtained for $\ell=1$.
Note also that under the same hypotheses the bound for $\ell=3$ improves the one for $\ell=2$. Hence it would be interesting in the future to investigate on the existence of abelian surfaces  without absolutely irreducible curves of genus $\leq 3$ lying on them.

\section{To make explicit the lower bounds for the minimum distance}\label{explicit}
We now show how the terms $\# A(\mathbb{F}_q)$, $\Tr(A)$ and $H^2$ appearing in the lower bounds for the minimum distance $d(A,rH)$ given in Theorems \ref{ourbound} and \ref{minimum_distance_simple_surfaces} can be computed in many cases. As already said in the introduction of Section \ref{Weil_restriction_of_elliptic_curves}, three cases have to be distinguished in the case of principally polarized abelian surfaces, according to Weil classification. 

Let $A$ be a principally polarized abelian surface defined over ${\mathbb F}_q$ with Weil polynomial
$f_A(t)=(t-\omega_1)(t-\overline{\omega}_1)(t-\omega_2)(t-\overline{\omega}_2)$ where the $\omega_i$'s are complex numbers of modulus $\sqrt{q}$.
Then we get:
$$\# A(\mathbb{F}_q)=f_A(1)=(1-\omega_1)(1-\overline{\omega}_1)(1-\omega_2)(1-\overline{\omega}_2).$$
From formula (\ref{weil_polynomial}), we obtain:
$$\Tr(A)= \omega_1+\overline{\omega}_1+\omega_2+\overline{\omega}_2.$$

Moreover, for any divisor $H$ on $A$, the adjunction formula gives  
$$H^2=2\pi_H-2.$$

As recalled in Section \ref{Codes_from_Abelian_Surfaces}, if the divisor $H$ is ample then $rH$ is very ample as soon as $r\geq 3$.

\medskip

\begin{enumerate}[(1)]

\item In case $A$ is the Jacobian $\Jac(C)$ of a genus-2 curve $C$ defined over ${\mathbb F}_q$, 
the numerator $P_C(t)$ of the zeta function of $C$ is equal to the reciprocal polynomial of the Weil polynomial
 $f_{\Jac(C)}(t)$:
$$P_C(t)=t^{4} f_{\Jac(C)}\left(\frac{1}{t}\right)=(1-\omega_1t)(1-\overline{\omega}_1t)(1-\omega_2t)(1-\overline{\omega}_2t).$$
Hence we obtain
\begin{equation*}
\left\{
\begin{matrix}
\# C(\mathbb{F}_q) &=& q+1- (\omega_1+\overline{\omega}_1+\omega_2+\overline{\omega}_2) \\
\# C(\mathbb{F}_{q^2}) &=& q^2+1- (\omega_1^2+\overline{\omega}_1^2+\omega_2^2+\overline{\omega}_2^2)
\end{matrix}
\right.
\end{equation*}

and thus
$$\# \Jac(C)({\mathbb F}_q)=\frac{1}{2}(\# C({\mathbb F}_{q^2})+\# C({\mathbb F}_q)^2)-q.$$

\medskip

Now, choosing $H=C$ for instance, we get an ample divisor with $H^2=C^2=2\pi_C-2=2$.

\bigskip

\item  In case 
$A$ is the product $E_1\times E_2$ of two elliptic curves $E_1$ and $E_2$
each partial trace $\Tr(E_i)=\omega_i+ \overline{\omega}_i$ is determined by $\# E_i(\mathbb{F}_q) = q+1-\Tr(E_i)$.
So we have 
$\# A(\mathbb{F}_q)= \#E_1(\FF_q)\times \#E_2(\FF_q)$ and $\Tr(A)= \Tr(E_1)+\Tr(E_2)$.

\medskip

Any choice of rational points $P_i\in E_i$ leads to an ample divisor
$H=E_1\times\{P_2\}+\{P_1\}\times E_2$
 such that
$H^2=(E_1\times\{P_2\})^2+(\{P_1\}\times E_2)^2+2(E_1\times\{P_2\}).(\{P_1\}\times E_2)=0+0+2\times 1=2$.

\bigskip

\item  In the last case where $A=W_{\mathbb{F}_{q^2}/\mathbb{F}_q}(E)$ is the $\mathbb{F}_{q^2}/\mathbb{F}_q$-Weil restriction of an elliptic curve $E$ defined over $\mathbb{F}_{q^2}$, then we have already seen in Subsection \ref{WeilRest} that $\# A(\mathbb{F}_q)=\# E(\mathbb{F}_{q^2})$ and that $\Tr(A)=0$.

\medskip

As an ample divisor on $A$, one can choose for instance $H=E+E^q$ where  $E^q$ is the image of $E$ by the generator $\sigma:x\longmapsto x^q$ of the Galois group $\Gal(\mathbb{F}_{q^2}/\mathbb{F}_q)$. We 
thus  have $H^2=E^2+(E^q)^2+2E.E^q=0+0+2\times 1=2$.
\end{enumerate} \ \\ \ \\
\noindent
{\bf Acknowledgments:} The authors would like to thank the anonymous referee for relevant observations and in particular for advices which have led to a sharper form of Theorem~\ref{minimum_distance_simple_surfaces}.
\nocite{*} \\ \ \\ \ \\
\noindent
\bigskip
\printbibliography

\Addresses

\end{document}